\title{Factor-balanced $S$-adic languages}
\author[L.~Poirier]{L\'eo Poirier}
\address{Universit\'e de Lyon, ENS de Lyon, D\'epartement informatique de
l'ENS de Lyon, 46 all\'ee d'Italie, F-69007 Lyon, France}
\email{leo.poirier@ens-lyon.fr}
\author[W.~Steiner]{Wolfgang Steiner}
\address{Université Paris Cit\'e, CNRS, IRIF, F-75006 Paris, France}
\email{steiner@irif.fr}
\date{\today}
\thanks{This work was supported by the Agence Nationale de la Recherche through the project CODYS (ANR-18-CE40-0007).}
\newtheorem{lemma}{Lemma}[section]
\newtheorem{theorem}[lemma]{Theorem}
\newtheorem{proposition}[lemma]{Proposition}
\newtheorem{corollary}[lemma]{Corollary}
\numberwithin{equation}{section}
\newcommand{\cF}{\mathcal{F}} 
\newcommand{\cM}{\mathcal{M}} 
\newcommand{\cL}{\mathcal{L}} 
\newcommand{\bsigma}{\boldsymbol{\sigma}}
\newcommand{\bl}{\boldsymbol{\ell}}
\newcommand{\bv}{\mathbf{v}}
\begin{document} 
\begin{abstract}
A set of words, also called a language, is letter-balanced if the number of occurrences of each letter only depends on the length of the word, up to a constant. 
Similarly, a language is factor-balanced if the difference of the number of occurrences of any given factor in words of the same length is bounded.
The most prominent example of a letter-balanced but not factor-balanced language is given by the Thue--Morse sequence. 
We establish connections between the two notions, in particular for languages given by substitutions and, more generally, by sequences of substitutions. 
We show that the two notions essentially coincide when the sequence of substitutions is proper. 
For the example of Thue--Morse--Sturmian languages, we give a full characterisation of factor-balancedness.
\end{abstract}

\maketitle

\section{Introduction}
The study of balancedness of languages goes back at least to Morse and Hedlund~\cite{Morse-Hedlund40} who proved that each block of length $n$ in a Sturmian sequence of slope $\alpha$ has $\lfloor n\alpha \rfloor$ or $\lceil n\alpha \rceil$ occurrences of the letter that has frequency $\alpha$ (and thus $\lfloor n(1{-}\alpha)\rfloor$ or $\lceil n(1{-}\alpha)\rceil$ occurrences of the other letter).
In other words, the difference between the number of occurrences of a letter in blocks of the same length is at most 1; we call this property letter-1-balanced, previously it has often been simply called balanced.
More generally, a language is letter-balanced if the number of occurrences of a letter only depends on the length of a word in the language, up to an additive constant.
We do not only consider the occurrence of letters but also of longer blocks, and we say that a language is factor-balanced if the number of occurrences of each block in a word of the language only depends on the length of the word, up to a constant that can depend on the block. 
Usually, languages coming from a symbolic dynamical system (or subshift) are considered; we use the slightly weaker property of being factorial. 
For infinite sequences, balancedness is equivalent to bounded symbolic discrepancy, as studied in \cite{Adamczewski04}.
These concepts have applications in operations research, for optimal routing and scheduling and are related to Fraenkel's conjecture; see \cite{Berthe-Cecchi19} for references. 

Some relations between letter-balancedness and factor-balancedness have been studied in \cite{Adamczewski03,Queffelec10}, and more recently in \cite{Berthe-Cecchi19}.
Here, we improve on these results and show that factor-balancedness is preserved by the application of a substitution. 
We consider languages (or subshifts) given by a sequence of substitutions, also called $S$-adic languages.
When the substitutions are left or right proper, i.e., the image of each letter starts or ends with the same letter, we show that letter-balancedness (on all levels) implies factor-balancedness; this was previously known only under the assumption that the substitutions have unimodular incidence matrices \cite{BCDLPP21}. 
A particular case is that of a substitutive shift with a proper substitution. 
Here, we cannot remove the assumption of properness, as for example the Thue--Morse shift is not factor-balanced \cite{Berthe-Cecchi19}; we give a short proof in Section~\ref{sec:thue-morse-sturmian}.
For non-proper substitutions, we have to require balancedness for all factors of length $2$, not only for letters, in order to get factor-balancedness.

In Section~\ref{sec:balancedness}, we define most of the notions and give a characterisation of letter-balancedness in terms of the distance to a frequency vector. 
The effect of substitutions on balancedness is studied in Section~\ref{sec:substitutions}.
Section~\ref{sec:s-adic-languages} contains our main results, on sequences of substitutions. 
Finally, we consider the balancedness of a particular class of $S$-adic languages in Section~\ref{sec:thue-morse-sturmian}.

\section{Balancedness} \label{sec:balancedness}
For a finite alphabet~$A$, let $A^*$ be the set of finite words over~$A$.
A~\emph{language} is a subset $\cL \subseteq A^*$. 
A~word $v \in A^*$ is a \emph{factor} of $w \in A^*$ if there exist $p,s \in A^*$ such that $w = pvs$; here, $p$ is a \emph{prefix} and $s$ is a \emph{suffix} of~$w$.
We denote the set of factors of $w$ by $\cF(w)$, and the set of factors of elements of $\cL$ by $\cF(\cL)$. 
A~langage~$\cL$ is \emph{factorial} if $\cF(\cL) = \cL$. 
The length of a word $w \in A^*$ is denoted by $|w|$, i.e., $|w| = n$ if $w \in A^n$.
We denote the prefix (resp.\ suffix) of length~$n$ of a word~$w$ by $\mathrm{pref}_n(w)$ (resp.\ $\mathrm{suff}_n(w)$). 
The number of \emph{occurrences} of a word~$v$ in $w$, i.e., the number of different decompositions $w = pvs$, is denoted by $|w|_v$. 
A~language~$\cL$ is called \emph{$C$-balanced w.r.t.~$v$} if 
\[
\big| |w|_v-|w'|_v\big| \le C \qquad \mbox{for all $w,w' \in \cL$ with $|w| = |w'|$}.
\]
It is called \emph{$C$-balanced for length~$n$} if it is $C$-balanced for all $v \in A^n$.
We often omit the~$C$ and say that a language is \emph{balanced for length~$n$} if it is $C$-balanced for length~$n$ for some $C \ge 0$. 
Instead of ``($C$-)balanced for length~1'', we also say \emph{letter-($C$-)balanced}; other papers use the term ``balanced'' instead of letter-1-balanced or instead of letter-balanced, and a letter-balanced language is sometimes called ``finitely balanced''.
A~language is \mbox{\emph{factor-($C$-)balanced}} if it is ($C$-)balanced for all lengths $n\ge 1$. 
Note that factor-$C$-balancedness is a strong property that is satisfied for certain Sturmian languages~\cite{Fagnot-Vuillon02}, but we do not study this property here.
We are only interested in factor-balancedness, which means that for each~$n$ there exists $C_n$ such that~$\cL$ is $C_n$-balanced for length~$n$; equivalently, for each $v \in \cF(\cL)$ there exists $C_v$ such that $\cL$ is $C_v$-balanced w.r.t.~$v$. (Note that $|w|_v = 0$ for all $w \in \cL$, $v \notin \cF(\cL)$.) 
We first show that balancedness for length~$n$ is the same as balancedness for lengths up to~$n$.

\begin{lemma} \label{l:nto1}
If a language is balanced for length~$n$, then it is balanced for all lengths $k \le n$.
\end{lemma}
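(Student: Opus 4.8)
The plan is to reduce the statement to a single downward step and then iterate: I would show that if $\cL$ is $C$-balanced for length~$n$, then it is $C'$-balanced for length $n-1$ for some constant $C'$, since applying this repeatedly yields balancedness for every length $k \le n$. This avoids any global bookkeeping over windows and keeps the combinatorics to a minimum.

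For the single step, fix a word $u \in A^{n-1}$. The key observation is the elementary counting identity
\[
|w|_u = \sum_{a \in A} |w|_{ua} + \varepsilon_w, \qquad \varepsilon_w = \begin{cases} 1 & \text{if $u$ is a suffix of $w$,}\\ 0 & \text{otherwise,}\end{cases}
\]
valid for every $w \in A^*$. Indeed, every occurrence of~$u$ in~$w$ is followed by exactly one letter $a \in A$ --- giving an occurrence of the length-$n$ word $ua$ at the same position --- unless the occurrence is a suffix of~$w$, which is precisely the case recorded by~$\varepsilon_w$. Conversely, distinct pairs (occurrence of $u$, following letter) give distinct occurrences of the various $ua$, so the identity is an exact bijective count.

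Now take any $w, w' \in \cL$ with $|w| = |w'|$. Subtracting the identities for $w$ and $w'$ and applying the triangle inequality gives
\[
\big||w|_u - |w'|_u\big| \le \sum_{a\in A}\big||w|_{ua} - |w'|_{ua}\big| + |\varepsilon_w - \varepsilon_{w'}| \le |A|\,C + 1,
\]
since each $ua$ has length~$n$, so $\big||w|_{ua}-|w'|_{ua}\big| \le C$ by hypothesis, while $\varepsilon_w,\varepsilon_{w'}\in\{0,1\}$ force $|\varepsilon_w - \varepsilon_{w'}| \le 1$. As the bound $|A|C+1$ depends neither on $u$ nor on $w,w'$, the language is $(|A|C+1)$-balanced for length $n-1$.

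Iterating this step from $n$ down to any $k \le n$ then proves the claim, the constant growing (harmlessly) at each stage. The only point that requires care is the boundary term $\varepsilon_w$, but since it contributes at most~$1$ it is no real obstacle; the heart of the argument is the telescoping identity, after which everything reduces to a direct use of the hypothesis and the triangle inequality. I would expect the author's proof to follow essentially this route, possibly prepending letters (working with $au$ and prefixes) instead of appending them, which is entirely symmetric.
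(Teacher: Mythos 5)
Your proof is correct and takes essentially the same approach as the paper: the paper's proof uses the identity $|w|_v = \sum_{s\in A^{n-k}} |w|_{vs} + |\mathrm{suff}_{n-1}(w)|_v$, which is exactly your counting identity carried out in one shot from length $k$ up to length $n$ (your boundary term $\varepsilon_w$ being precisely the number of occurrences of $u$ in the length-$(n-1)$ suffix), rather than iterated one letter at a time. The resulting constants differ ($(\#A)^{n-k}C + n-1$ versus your iterated $|A|C+1$ bound), but both are finite, so the difference is purely organizational.
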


\begin{proof}
Let $\cL \subset A^*$ be $C$-balanced for length~$n$, $k < n$.
For all $v \in A^k$, $w \in A^*$, we have $|w|_v = \sum_{s\in A^{n-k}} |w|_{vs} + |\mathrm{suff}_{n-1}(w)|_v$, thus $\big||w|_v - |w'|_v\big| \le (\# A)^{n-k} C + n-1$ for $w' \in A^{|w|}$.
\end{proof}

We will also use the following characterisation of letter-balancedness in terms of distance from the line defined by a frequency vector, cf.\ \cite{Berthe-Tijdeman02,Adamczewski03}.

\begin{proposition} \label{p:frequency}
Let $\cL \subset A^*$ be an infinite letter-$C$-balanced factorial language. 
Then there exists a (frequency) vector $(f_a)_{a\in A}$ such that $||w|_a - f_a|w|| \le C$ for all $a \in A$, $w \in \cL$. 
\end{proposition}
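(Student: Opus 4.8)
The plan is to fix a single letter $a \in A$, to extract $f_a$ as the common linear growth rate of the largest and smallest possible letter counts, and then to assemble the vector at the end. First I would record two preliminary facts. Since $A$ is finite and $\cL$ is infinite, $\cL$ contains words of arbitrarily large length; being factorial, it then contains every prefix of such a word, hence a word of \emph{every} length $n$, and only finitely many of each length. This makes it legitimate to define, for each $n \ge 1$,
\[
M_n = \max\{|w|_a : w \in \cL,\ |w| = n\}, \qquad m_n = \min\{|w|_a : w \in \cL,\ |w| = n\},
\]
with both extrema attained. In this notation, letter-$C$-balancedness w.r.t.\ $a$ is exactly the statement $M_n - m_n \le C$ for all $n$.

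The engine of the argument is that letter counts are \emph{exactly} additive under concatenation, with no boundary correction: if $w \in \cL$ has length $m+n$, its length-$m$ prefix $w_1$ and length-$n$ suffix $w_2$ are factors, hence lie in $\cL$, and $|w|_a = |w_1|_a + |w_2|_a$. (This is precisely where single letters behave more simply than longer factors.) Applying this to a word attaining $M_{m+n}$ gives $M_{m+n} \le M_m + M_n$, and applying it to a word attaining $m_{m+n}$ gives $m_{m+n} \ge m_m + m_n$; so $(M_n)$ is subadditive and $(m_n)$ superadditive. Both are nonnegative and grow at most linearly ($M_n \le n M_1$), so Fekete's lemma applies in both forms and yields
\[
f^+ := \lim_n \frac{M_n}{n} = \inf_n \frac{M_n}{n}, \qquad f^- := \lim_n \frac{m_n}{n} = \sup_n \frac{m_n}{n}.
\]
From $m_n \le M_n$ I get $f^- \le f^+$, and from $M_n - m_n \le C$ I get $f^+ - f^- = \lim_n (M_n - m_n)/n = 0$; hence $f^+ = f^- =: f_a$.

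The step I expect to require the most care is upgrading this asymptotic coincidence to a uniform bound with the \emph{same} constant $C$ (a crude quasi-additivity estimate would only give a multiple of $C$). The decisive point is that the two Fekete identities are one-sided: the infimum characterisation gives $M_n \ge f_a n$ and the supremum characterisation gives $m_n \le f_a n$ for \emph{every} $n$, so the line $f_a n$ is squeezed between $m_n$ and $M_n$. Therefore, for any $w \in \cL$ with $|w| = n$,
\[
|w|_a - f_a n \le M_n - f_a n \le M_n - m_n \le C, \qquad f_a n - |w|_a \le f_a n - m_n \le M_n - m_n \le C,
\]
which is exactly $\big||w|_a - f_a|w|\big| \le C$. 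Carrying this out for each $a \in A$ produces the frequency vector $(f_a)_{a \in A}$; I would note in passing that summing the bound over $a$ and using $\sum_{a} |w|_a = |w|$ forces $\sum_a f_a = 1$, although this normalisation is not needed for the statement.
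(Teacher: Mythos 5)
Your proof is correct, and it takes a genuinely different route from the paper's. The paper obtains the frequency vector by compactness: it extracts a sequence $v_n \in \cL$ with $|v_n| \to \infty$ whose empirical frequencies $|v_n|_a/|v_n|$ converge, and then, for each fixed $w \in \cL$, decomposes $v_n$ into $\lfloor |v_n|/|w| \rfloor$ blocks of length $|w|$, compares each block to $w$ using letter-$C$-balancedness, and lets $n \to \infty$ to get $\big||w|_a - f_a|w|\big| \le C$. You instead define $f_a$ intrinsically, via Fekete's lemma applied to the extremal counts $M_n$ (subadditive, by exact additivity of letter counts over concatenation and factoriality) and $m_n$ (superadditive), and the crucial observation that the two Fekete identities are one-sided, giving the squeeze $m_n \le f_a n \le M_n$ for \emph{every} $n$; combined with $M_n - m_n \le C$ this yields the bound with the same constant $C$, with no limiting argument per word. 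Both proofs exploit the same underlying fact -- that single-letter counts split exactly across concatenations, which is also what makes the paper's block decomposition exact -- but your version avoids the subsequence extraction, produces $f_a$ as a canonical limit (indeed $M_n/n$ and $m_n/n$ both converge to it, so the frequency exists uniformly over the language rather than merely as a limit point), and makes the constant-$C$ bookkeeping transparent through the inequalities $M_n \ge f_a n \ge m_n$. The paper's argument, on the other hand, handles the whole vector $(f_a)_{a \in A}$ in one pass and is the template for arguments where counts are only approximately additive. Your closing remark that $\sum_a f_a = 1$ is also correct (sum the bound over $a$ and let $|w| \to \infty$), and your preliminary observation that factoriality plus infiniteness guarantees words of every length -- needed for $M_n$, $m_n$ to be defined for all $n$ -- is exactly the right justification.
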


\begin{proof}
Since $[0,1]^{\#A}$ is compact, there exists a vector $(f_a)_{a\in A}$ and a sequence of words $v_n \in \cL$ such that $\lim_{n\to\infty} |v_n| = \infty$ and $\lim_{n\to\infty} |v_n|_a/|v_n| = f_a$ for all $a \in A$. 
For arbitrary but fixed $w \in \cL$, set $k_n = \lfloor|v_n|/|w|\rfloor$, and decompose $v_n = v_{n,1} \cdots v_{n,k_n} v_{n,k_n+1}$ with $|v_{n,i}| = |w|$ for all $1 \le i \le k_n$.
Since $v_{n,i} \in \cL$, $\cL$~is letter-$C$-balanced, and $|v_{n,k_n+1}| < |w|$, we obtain that
\[
\bigg||w|_a - \frac{|v_n|_a}{|v_n|}\,|w|\bigg| = \frac{|w|}{|v_n|}\, \bigg||v_n|_a - \frac{|v_n|}{|w|}\, |w|_a\bigg| < \frac{|w|}{|v_n|} \bigg(|w| + \sum_{i=1}^{k_n} \big||v_{n,i}|_a - |w|_a\big|\bigg) \le \frac{|w|^2}{|v_n|} + \frac{k_n|w|}{|v_n|} C
\]
for all $a \in A$.
Letting $n \to \infty$, this gives that $||w|_a - f_a|w|| \le C$.
\end{proof}

\begin{lemma} \label{l:frequency2}
Let $\cL \subset A^*$ and $(f_a)_{a\in A}$ such that $||w|_a - f_a|w|| \le C$ for all $a \in A$, $w \in \cL$. 
Then $\cL$ is letter-$(2C)$-balanced.
\end{lemma}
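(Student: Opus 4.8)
The plan is to reduce letter-balancedness to a direct application of the triangle inequality, exploiting the fact that the frequency term $f_a|w|$ depends only on the length~$|w|$ and hence cancels between two words of equal length.

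First I would fix a letter $a \in A$ and two words $w, w' \in \cL$ with $|w| = |w'|$. The crucial observation is that since the lengths agree, $f_a|w| = f_a|w'|$, so I may write
\[
|w|_a - |w'|_a = \big(|w|_a - f_a|w|\big) - \big(|w'|_a - f_a|w'|\big).
\]
Taking absolute values and applying the triangle inequality gives
\[
\big||w|_a - |w'|_a\big| \le \big||w|_a - f_a|w|\big| + \big||w'|_a - f_a|w'|\big|.
\]
By hypothesis each of the two summands on the right is at most~$C$, so the left-hand side is at most $2C$. Since $a$, $w$, and $w'$ were arbitrary (subject only to $|w|=|w'|$), this shows that $\cL$ is $C$-balanced w.r.t.\ every letter with constant $2C$, i.e., letter-$(2C)$-balanced.

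There is essentially no obstacle here: the statement is a one-line consequence of the triangle inequality once one notices the cancellation of the frequency term. The only point requiring any care is to invoke the equality of lengths $|w|=|w'|$ to justify $f_a|w| = f_a|w'|$; this is exactly what makes the common reference point $f_a|w|$ available and is the reason the bound is $2C$ rather than something depending on the lengths. This lemma is the converse direction to Proposition~\ref{p:frequency}, and together the two results show that letter-balancedness and proximity to a frequency line are equivalent up to a factor of~$2$ in the constant.
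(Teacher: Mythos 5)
Your proof is correct and matches the paper's own argument: both insert the common reference point $f_a|w| = f_a|w'|$ and apply the triangle inequality to bound $\big||w|_a - |w'|_a\big|$ by $2C$. No differences worth noting.
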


\begin{proof}
We have $||w|_a|-|w'|_a| \le ||w|_a - f_a|w|| + |f_a|w| - f_a|w'|| \le 2C$ for all $w, w' \in \cL$, $a \in A$, such that $|w|=|w'|$.
\end{proof}

\section{Substitutions} \label{sec:substitutions}
In this section, we study how the application of a substitution influences balancedness. 
Here, a~\emph{substitution}~$\sigma$ is a morphism from $A^*$ to $B^*$, with the operation of concatenation, i.e., $\sigma(v w) = \sigma(v) \sigma(w)$ for all $v, w \in A^*$. 
We use the notation 
\[
\|\sigma\| := \max_{a\in A} |\sigma(a)|, \qquad \langle\sigma\rangle := \min_{a\in A} |\sigma(a)|,
\]
and call a substitution \emph{non-erasing} if all images of letters are non-empty, i.e., $\langle \sigma \rangle \ge 1$. 
It is left (resp.\ right) \emph{proper} when all letter images start (resp.\ end) with the same letter.

To show that substitutions preserve balancedness, we use the following lemma. 

\begin{lemma} \label{l:wxyz}
Let $\cL \subset A^*$ be a letter-$C$-balanced factorial language and $\sigma: A^* \to B^*$ a substitution.
Then, for all $w, w' \in \cF(\sigma(\cL))$ with $|w| = |w'|$, there exist $x, x', z, z' \in B^*$, $y, y' \in \cL$, such~that 
\[
\begin{gathered}
w = x\, \sigma(y)\, z, \ w' = x'\, \sigma(y')\, z', \quad |y| = |y'|, \ 
\max\{|x\, z|, |x' z'|\} \le (2+C\, \#A)\, \|\sigma\| - 2.
\end{gathered}
\]
\end{lemma}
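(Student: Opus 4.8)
The plan is to start from the ``maximal'' decompositions of $w$ and $w'$ as factors of images of words in $\cL$, and then to transfer the whole discrepancy between the two interior lengths onto the length equation $|w| = |w'|$, rather than paying for it letter by letter. First I would fix $u \in \cL$ with $w \in \cF(\sigma(u))$, write $u = u_1 \cdots u_m$ with $u_t \in A$, so that $\sigma(u) = \sigma(u_1)\cdots\sigma(u_m)$, and observe that $w$ occupies a range of positions beginning inside some $\sigma(u_i)$ and ending inside some $\sigma(u_j)$ with $i \le j$. Peeling off the partial images at the two ends yields $w = x_0\,\sigma(y_0)\,z_0$ with $y_0 = u_{i+1}\cdots u_{j-1}$, where $x_0$ is a proper suffix of $\sigma(u_i)$ and $z_0$ a proper prefix of $\sigma(u_j)$; hence $|x_0|, |z_0| \le \|\sigma\| - 1$, and $y_0 \in \cF(\cL) = \cL$ by factoriality. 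The same construction on $w'$ gives $w' = x_0'\,\sigma(y_0')\,z_0'$ with the analogous bounds.

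The key step is that these maximal interiors need not have the same length, so I would deliberately avoid bounding $\big||y_0| - |y_0'|\big|$ and shrinking accordingly. Instead, set $k = \min\{|y_0|, |y_0'|\}$ and assume without loss of generality $|y_0'| \le |y_0|$, so $k = |y_0'|$. For $w'$ I keep the maximal decomposition, setting $(x',y',z') = (x_0',y_0',z_0')$, so that $|y'| = k$ and $|x'z'| \le 2\|\sigma\| - 2$. For $w$ I shrink the interior to its length-$k$ prefix $y = \mathrm{pref}_k(y_0)$ and absorb the dropped suffix into $z$, i.e.\ $x = x_0$ and $z = \sigma(u_{i+k+1}\cdots u_{j-1})\,z_0$, giving $w = x\,\sigma(y)\,z$ with $|y| = k = |y'|$ and $y \in \cL$.

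It then remains to bound $|xz|$, and here the decisive point is that the gain comes directly from letter-balancedness and not from counting dropped letters. Since $|y| = |y'| = k$ with $y, y' \in \cL$, balancedness yields $\big||y|_a - |y'|_a\big| \le C$ for every $a \in A$, whence
\[
\big||\sigma(y)| - |\sigma(y')|\big| = \Big|\sum_{a \in A} (|y|_a - |y'|_a)\,|\sigma(a)|\Big| \le C\,\#A\,\|\sigma\|.
\]
Because $|w| = |w'|$, we have $|xz| = |w| - |\sigma(y)|$ and $|x'z'| = |w'| - |\sigma(y')|$, so $|xz| = |x'z'| + (|\sigma(y')| - |\sigma(y)|) \le (2\|\sigma\| - 2) + C\,\#A\,\|\sigma\| = (2 + C\,\#A)\,\|\sigma\| - 2$, while $|x'z'| \le 2\|\sigma\| - 2$ is smaller still, giving the claimed bound on $\max\{|xz|, |x'z'|\}$.

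The main obstacle to keep in mind is exactly the tempting but wasteful route of equalising the interiors by dropping $\big||y_0| - |y_0'|\big|$ letters and charging up to $\|\sigma\|$ per letter: the difference $\big||y_0| - |y_0'|\big|$ cannot be bounded by $C\,\#A$ in general, so this would not reproduce the stated constant. Forcing both interiors to the common length $k$ and letting the surplus of the longer word fall into $xz$ through $|w| = |w'|$ is what keeps the penalty purely additive and equal to $C\,\#A\,\|\sigma\|$. After that, only minor care is needed for the degenerate case where $w'$ sits inside a single letter image (so $k = 0$ and both interiors are empty) and for the off-by-one in the ``proper sub-image'' length bounds $|x_0|, |z_0| \le \|\sigma\| - 1$.
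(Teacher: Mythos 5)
Your proof is correct and follows essentially the same route as the paper's: both start from maximal decompositions $w = x\,\sigma(v)\,z$, $w' = x'\,\sigma(v')\,z'$ with end pieces of length at most $\|\sigma\|-1$, truncate the longer interior to the length of the shorter one (absorbing the dropped image into the outer part), and then use letter-$C$-balancedness plus $|w| = |w'|$ to bound the enlarged end piece by $(2 + C\,\#A)\,\|\sigma\| - 2$. The only difference is notational (which of the two words gets truncated), so there is nothing to add.
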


\begin{proof}
Since $w, w' \in \cF(\sigma(\cL))$ and $\cL$ is factorial, we can write $w = x\, \sigma(v)\, u$ and $w = x' \sigma(v')\, u'$ with $v, v' \in \cL$, $u, u', x, x' \in A^*$ such that $|u|, |u'|, |x|, |x'| < \|\sigma\|$. 
Assume w.l.o.g.\ that $|v| \le |v'|$, let $y = v$, $v' = y' s'$ with $|y'| = |y|$, $z = u$, $z' = \sigma(s')\, u'$. 
Since $\cL$ is factorial, we have $y, y' \in \cL$.
Since $\cL$ is letter-$C$-balanced, we have 
\[
\big||\sigma(y)| - |\sigma(y')|\big| \le \sum_{a\in A} |\sigma(a)|\, \big||y|_a - |y'|_a\big| \le (\#A)\, C\, \|\sigma\|,
\]
thus
\[
|x' z'| = |w'| - |\sigma(y')| \le |w| - |\sigma(y)| + (\#A)\, C\, \|\sigma\| \le 2\, (\|\sigma\|-1) + (\#A)\, C\, \|\sigma\|.
\]
Therefore, $w = x\, \sigma(y)\, z$ and $w' = x'\, \sigma(y')\, z'$ satisfy all the required properties. 
\end{proof}

\begin{proposition} \label{p:sigma}
Let $\cL \subset A^*$ be a factorial language and $\sigma: A^* \to B^*$ a substitution.
If $\cL$ is letter-balanced, then $\cF(\sigma(\cL))$ is letter-balanced.
\end{proposition}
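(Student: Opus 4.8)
The plan is to fix a constant $C$ for which $\cL$ is letter-$C$-balanced, and then to derive a uniform bound on $\big||w|_b-|w'|_b\big|$ for every letter $b\in B$ and all $w,w'\in\cF(\sigma(\cL))$ with $|w|=|w'|$, feeding directly off Lemma~\ref{l:wxyz}. That lemma hands me decompositions $w=x\,\sigma(y)\,z$ and $w'=x'\,\sigma(y')\,z'$ with $y,y'\in\cL$, $|y|=|y'|$, and boundary pieces of total length $\max\{|x z|,|x' z'|\}\le(2+C\,\#A)\,\|\sigma\|-2$, a bound independent of $|w|$. So the whole problem reduces to comparing the occurrences of $b$ in $\sigma(y)$ and $\sigma(y')$, up to a bounded contribution from the four boundary words.

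The crucial elementary fact I would exploit is that for a \emph{single} letter $b$ the occurrence count is additive across concatenation, $|uv|_b=|u|_b+|v|_b$, since a one-letter factor cannot straddle a boundary. Hence $|w|_b=|x|_b+|\sigma(y)|_b+|z|_b$ and likewise for $w'$, and moreover $|\sigma(y)|_b=\sum_{a\in A}|y|_a\,|\sigma(a)|_b$. Using $|y|=|y'|$ and the letter-$C$-balancedness of $\cL$ to control each $\big||y|_a-|y'|_a\big|\le C$, together with $|\sigma(a)|_b\le\|\sigma\|$, I get
\[
\big||\sigma(y)|_b-|\sigma(y')|_b\big|\le\sum_{a\in A}\big||y|_a-|y'|_a\big|\,|\sigma(a)|_b\le(\#A)\,C\,\|\sigma\|.
\]
The boundary terms satisfy $|x|_b+|z|_b+|x'|_b+|z'|_b\le|xz|+|x'z'|\le 2\big((2+C\,\#A)\,\|\sigma\|-2\big)$, again by Lemma~\ref{l:wxyz}. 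Combining these gives $\big||w|_b-|w'|_b\big|\le 2(2+C\,\#A)\,\|\sigma\|-4+(\#A)\,C\,\|\sigma\|$, a constant depending only on $\sigma$ and $C$ but not on $b$ or on the common length of $w,w'$. Since $\cF(\sigma(\cL))$ is factorial by construction, this is exactly letter-balancedness.

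I do not expect a genuine obstacle here, as the computation is routine once Lemma~\ref{l:wxyz} is available; the one point to handle with care is that single-letter additivity across concatenation is precisely the ingredient that makes the argument close without any cancellation between the interior and the boundary. It is worth flagging that this additivity is exactly what \emph{fails} for factors of length $\ge 2$: such factors can straddle the junctions between $x$, $\sigma(y)$, $z$, and the internal seams between images $\sigma(a)\sigma(a')$ of consecutive letters inside $\sigma(y)$. This is why the present proposition is limited to letters, and why obtaining factor-balancedness requires the additional structural hypotheses (properness, or balancedness for length~$2$) developed in the later sections.
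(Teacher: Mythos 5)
Your proof is correct and follows essentially the same route as the paper's: both invoke Lemma~\ref{l:wxyz} for the decompositions $w = x\,\sigma(y)\,z$, $w' = x'\,\sigma(y')\,z'$ and then combine single-letter additivity, the identity $|\sigma(y)|_b = \sum_{a\in A}|y|_a\,|\sigma(a)|_b$, and letter-$C$-balancedness of $\cL$ to bound $\big||w|_b-|w'|_b\big|$ by a constant. The only (immaterial) difference is that the paper bounds the boundary contribution by $\big||xz|_b-|x'z'|_b\big|\le\max\{|xz|,|x'z'|\}$ rather than by the sum $|xz|+|x'z'|$, yielding a slightly smaller constant.
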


\begin{proof}
Suppose that $\cL$ is $C$-letter-balanced, and let $w = x\, \sigma(y)\, z$, $w' = x'\, \sigma(y')\, z'$ be as in Lemma~\ref{l:wxyz}.
Then, for all $b \in B$,
\begin{multline*}
\big||w|_b-|w'|_b\big| \le \big||xz|_b-|x'z'|_b\big| + \big||\sigma(y)|_b-|\sigma(y')|_b\big| \\
\le (2 + C\,\#A)\, \|\sigma\| - 2 + \sum_{a\in A} \big||y|_a-|y'|_a\big|\, |\sigma(a)|_b \le 2\,(1+C\,\#A)\, \|\sigma\| - 2. \qedhere
\end{multline*}
\end{proof}

To study balancedness for length~$n$, we use the \emph{$n$-coding} of a word $a_1 a_2 \cdots a_N \in A^N$, $N \ge 0$, which is the word over the alphabet~$A^n$ defined by
\[
(a_1 a_2 \cdots a_N)^{(n)} = (a_1 a_2 \cdots a_n) (a_2 a_3 \cdots a_{n+1}) \cdots (a_{N-n+1} a_{N-n+2} \cdots a_N) \in (A^n)^{N-n+1}
\]
if $N \ge n$, the empty word if $N < n$. 
Note that $|w|_v = |w^{(n)}|_v$ for all $v \in A^n$; in particular, a language $\cL$ is $C$-balanced for length~$n$ if and only if $\{w^{(n)} : w \in \cL\}$ is letter-$C$-balanced (over the alphabet~$A^n$). 

\begin{proposition} \label{p:sigmamn}
Let $\cL \subset A^*$ be a factorial language that is balanced for length~$n$, \mbox{$\sigma\!:\! A^*\! \to \!B^*$} a substitution, and $u \in B^*$ a (possibly empty) word that is a prefix of $\sigma(a) u$ for all $a \in A$ or a suffix of $u \sigma(a)$ for all~$a \in A$.
Then $\cF(\sigma(\cL))$ is balanced for length $\min_{w \in \cL \cap A^{n-1}} |\sigma(w)| {+} |u| {+} 1$.
In particular, 
\begin{itemize} 
\item
$\cF(\sigma(\cL))$ is balanced for length~$n$ if $\sigma$ is non-erasing, 
\item
$\cF(\sigma(\cL))$ is balanced for length $n{+}1$ if $\sigma$ is left or right proper. \end{itemize}
\end{proposition}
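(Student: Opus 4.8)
The plan is to reduce the statement to Proposition~\ref{p:sigma} by passing to block codings. Recall from the remark preceding the proposition that $\cL$ being balanced for length~$n$ means exactly that $\cL^{(n)} := \{w^{(n)} : w \in \cL\}$ is letter-balanced over the alphabet~$A^n$, and that $\cL^{(n)}$ is factorial because every factor of $w^{(n)}$ is the $n$-coding of a factor of~$w$, hence of an element of $\cL$. Dually, writing $m := \min_{w \in \cL \cap A^{n-1}} |\sigma(w)| + |u| + 1$, the language $\cF(\sigma(\cL))$ is balanced for length~$m$ if and only if $\cF(\{(\sigma(w))^{(m)} : w \in \cL\})$ is letter-balanced over~$B^m$, since $m$-coding commutes with taking factors. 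So it suffices to realise the $m$-codings of the words $\sigma(w)$, up to bounded boundary effects, as the image of the $n$-codings $w^{(n)}$ under a single substitution $\tau : (A^n)^* \to (B^m)^*$, and then to invoke Proposition~\ref{p:sigma}.

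I treat the case where $u$ is a prefix of $\sigma(a)u$ for all $a \in A$; the suffix case is symmetric, reading words from right to left (which swaps prefix/suffix and left/right proper). Write $\ell := \min_{w \in \cL \cap A^{n-1}} |\sigma(w)|$, so $m = \ell + |u| + 1$. The key point, which I expect to be the main obstacle, is that the length-$m$ factors of $\sigma(w)u$ that \emph{start inside a given block} $\sigma(a_i)$ depend only on the length-$n$ block $b_i := a_i a_{i+1} \cdots a_{i+n-1}$. Indeed, an easy induction on the prefix hypothesis shows that $\sigma(v)u$ begins with~$u$ for every $v \in A^*$; applying this to $v = a_{i+n} \cdots a_N$ shows that the suffix of $\sigma(w)u$ starting at the block $\sigma(a_i)$ begins with $\sigma(a_i \cdots a_{i+n-1})\,u$. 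A length-$m$ factor starting at offset $t \le |\sigma(a_i)|-1$ inside $\sigma(a_i)$ occupies at most the first $|\sigma(a_i)| + \ell + |u| = |\sigma(a_i)| + m - 1$ letters of this suffix, and $|\sigma(a_i \cdots a_{i+n-1})\,u| \ge |\sigma(a_i)| + \ell + |u|$ because $a_{i+1} \cdots a_{i+n-1} \in \cL$ has length~$n-1$; hence the factor lies inside $\sigma(a_i \cdots a_{i+n-1})u$ and is determined by $b_i$ and the fixed word~$u$. This is exactly where the prefix property of~$u$ and the choice of~$m$ are used.

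This lets me define, for each block $b = a_1 \cdots a_n$ occurring in $\cL^{(n)}$, the word $\tau(b) \in (B^m)^*$ listing the length-$m$ factors of $\sigma(a_1 \cdots a_n)\,u$ at offsets $0, 1, \ldots, |\sigma(a_1)|-1$; the computation above guarantees these factors are well-defined. Concatenating over the blocks of $w^{(n)}$ then yields, for every $w = a_1 \cdots a_N \in \cL$, the identity $\tau(w^{(n)}) = \mathrm{pref}_{|\sigma(a_1 \cdots a_{N-n+1})|}\big((\sigma(w)u)^{(m)}\big)$. Since $\cL^{(n)}$ is factorial and letter-balanced, Proposition~\ref{p:sigma} gives that $\cF(\tau(\cL^{(n)}))$ is letter-balanced over~$B^m$.

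It remains to transfer this back to $\cF(\{(\sigma(w))^{(m)} : w \in \cL\})$. Both $(\sigma(w))^{(m)}$ and $\tau(w^{(n)})$ are prefixes of the single word $(\sigma(w)u)^{(m)}$, and each differs from it only in a suffix of length at most $(n-1)\|\sigma\| + |u|$, a bound independent of~$w$ (here $a_{N-n+2}\cdots a_N \in \cL$ has length $n-1$, so $\ell \le |\sigma(a_{N-n+2}\cdots a_N)| \le (n-1)\|\sigma\|$). Consequently every factor of $(\sigma(w))^{(m)}$ agrees with a genuine factor of $\tau(w^{(n)})$ outside a bounded number of boundary letters, so letter-balancedness of $\cF(\tau(\cL^{(n)}))$ yields letter-balancedness of $\cF(\{(\sigma(w))^{(m)}\})$ after enlarging the constant; this is the same bounded-boundary bookkeeping as in Lemma~\ref{l:wxyz} and needs no new idea. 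This proves balancedness for length~$m$. Finally, the two special cases follow by choosing $u$ and bounding~$\ell$: taking $u = \varepsilon$ (the empty word is a prefix of every $\sigma(a)$) gives $m = \ell + 1$, while taking $u$ to be the common first (resp.\ last) letter of the images gives $m = \ell + 2$; in both cases $\ell \ge n-1$ by non-erasingness, so $m \ge n$ (resp.\ $m \ge n+1$), and Lemma~\ref{l:nto1} lets us descend to length~$n$ (resp.\ $n+1$).
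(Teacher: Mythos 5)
Your proof is correct, and its core construction is exactly the paper's: your substitution $\tau$ on the alphabet $A^n \cap \cL$, sending a block $a_1\cdots a_n$ to the list of length-$m$ factors of $\sigma(a_1\cdots a_n)u$ at offsets $0,\dots,|\sigma(a_1)|-1$, is literally the paper's $\hat\sigma$, and your prefix identity $\tau(w^{(n)}) = \mathrm{pref}_{|\sigma(a_1\cdots a_{N-n+1})|}\big((\sigma(w)u)^{(m)}\big)$ is the paper's equation~\eqref{e:sigmamn}. Where you diverge is in how the conclusion is assembled. The paper does not pass through Proposition~\ref{p:sigma} as a black box: it applies Lemma~\ref{l:wxyz} directly to $\sigma$ and $\cL$, decomposes arbitrary equal-length factors $w = x\,\sigma(y)\,z$, $w' = x'\sigma(y')\,z'$ of $\sigma(\cL)$, extends the coding identity to these decomposed words, and counts occurrences of $v \in B^m$ in one pass, producing an explicit constant $((\#A)C_1+2)\|\sigma\| - 2 + (n-1)\|\sigma\| + (\#A)^{n-1}C_n\|\sigma\|$. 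You instead invoke Proposition~\ref{p:sigma} applied to the coded data $(\tau, \cL^{(n)})$ — which internally runs the Lemma~\ref{l:wxyz} argument at the level of $\tau$ rather than $\sigma$ — and then transfer letter-balancedness from $\cF(\tau(\cL^{(n)}))$ back to $\cF(\{(\sigma(w))^{(m)}\})$ by a truncation argument exploiting that both $\tau(w^{(n)})$ and $(\sigma(w))^{(m)}$ are prefixes of $(\sigma(w)u)^{(m)}$ differing from it by a bounded suffix. Each route buys something: yours is more modular and only ever uses the coding identity for whole words $w \in \cL$, where it is clean (the paper's second display, applying \eqref{e:sigmamn} to $w = x\,\sigma(y)\,z$, silently requires that blocks near the end of $\sigma(y)$ not be perturbed by the leftover $z$ replacing $u$, an edge case your transfer step sidesteps); the paper's inlined version avoids your extra transfer bookkeeping and yields explicit constants. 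Your transfer step, though stated briefly, is indeed routine — one truncates the two $\tau$-factors to a common length, which stays in the factorial language $\cF(\tau(\cL^{(n)}))$, and absorbs the discrepancy of at most $(n-1)\|\sigma\|$ coded letters into the constant — so I consider the argument complete.
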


\begin{proof}
Let $\cL \subset A^*$, $\sigma: A^* \to B^*$, $u \in B^*$ be as in the statement of the lemma, $1 \le m \le \min_{w \in \cL \cap A^{n-1}} |\sigma(w)| {+} |u| {+} 1$.
Assume w.l.o.g.\ that $u$ is a prefix of $\sigma(a) u$ for all $a \in A$, the suffix case being symmetric. 
We define a substitution $\hat{\sigma}: (A^n \cap \cL)^* \to (B^m)^*$ by setting
\[
\hat{\sigma}(a_1 a_2 \cdots a_n) := \big(\sigma(a_1) \mathrm{pref}_{m-1}(\sigma(a_2 \cdots a_n)u)\big)^{(m)} \quad \mbox{for}\ a_1 \cdots a_n \in A^n \cap \cL.
\]
(Here, the alphabet is $A^n \cap \cL$). 
Then we have, for all $w \in \cL$,
\begin{equation} \label{e:sigmamn}
\big(\sigma(w) u\big)^{(m)} = \hat{\sigma}\big(w^{(n)}\big)\, \big(\sigma(\mathrm{suff}_{n-1}(w)) u\big)^{(m)}.
\end{equation}

Let $C_n$ be such that $\cL$ is $C_n$-balanced for length~$n$.
By Lemma~\ref{l:nto1}, $\cL$ is letter-$C_1$-balanced for some $C_1 \ge 0$.
Let $w, w' \in \cF(\sigma(\cL))$ with $|w| = |w'|$, and write $w = x\, \sigma(y)\, z$, $w' = x'\, \sigma(y')\, z'$ as in Lemma~\ref{l:wxyz}.
By~\eqref{e:sigmamn}, we have
\[
(wu)^{(m)} = \big(x\,\mathrm{pref}_{m-1}(\sigma(y)u)\big)^{(m)}\, \hat{\sigma}\big(y^{(n)}\big)\, \big(\sigma(\mathrm{suff}_{n-1}(y))z)u\big)^{(m)}.
\]
Using a similar decomposition for $(w'u)^{(m)}$, we obtain for $v \in B^m$ that
\[
\begin{aligned}
\big||w|_v - |w'|_v\big| & \le \max\big\{|xz|, |x'z'|\big\} + (n-1) \|\sigma\| + \sum_{t\in A^n \cap \cL} \big||y|_t -  |y'|_t\big|\, \big|\hat{\sigma}(t)\big|_v \\
& \le ((\#A)\, C_1 + 2) \|\sigma\| - 2 + (n-1) \|\sigma\| + (\#A)^{n-1}\, C_n\, \|\sigma\|.
\end{aligned}
\]
Here, we have used that $|w|_v = |w^{(m)}|_v$, that $\cL$ is $C_n$-balanced for length~$n$, and that $|\hat{\sigma}(a_1\cdots a_n)| = |\sigma(a_1)|$ for $a_1 \cdots a_n \in A^n \cap \cL$. 
This proves that $\cF(\sigma(\cL))$ is balanced for length $\min_{w \in \cL \cap A^{n-1}} |\sigma(w)| {+} |u| {+} 1$.
If $\sigma$ is non-erasing, then $|\sigma(w)| \ge n-1$ for all $w \in A^{n-1}$, thus $\cF(\sigma(\cL))$ is balanced for length~$n$.
If $\sigma$ is left or right proper, then $\sigma$ is non-erasing and $|u| \ge 1$, thus $\cF(\sigma(\cL))$ is balanced for length~$n{+}1$.
\end{proof}

\begin{theorem} \label{t:sigmafactor}
Let $\cL \subset A^*$ be a factorial language and $\sigma: A^* \to B^*$ a substitution.
If $\cL$ is factor-balanced, then $\cF(\sigma(\cL))$ is factor-balanced.
\end{theorem}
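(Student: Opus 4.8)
The plan is to prove that $\cF(\sigma(\cL))$ is balanced for every length $m\ge 1$, separating the non-erasing case, which follows almost directly from Proposition~\ref{p:sigmamn}, from the general case, which I would reduce to the non-erasing one by splitting off the letters that $\sigma$ deletes. Throughout I may assume $\cL$ is infinite, since a finite factorial language has finite images, and finite languages are trivially factor-balanced.

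If $\sigma$ is non-erasing, I would fix a target length $m$ and apply Proposition~\ref{p:sigmamn} with $n=m$ and $u$ the empty word $\varepsilon$, which is a prefix of $\sigma(a)u$ for every $a$. Since $\langle\sigma\rangle\ge 1$, every $w\in\cL\cap A^{m-1}$ (such $w$ exist, as an infinite factorial language contains words of every length) satisfies $|\sigma(w)|\ge m-1$, so $\cF(\sigma(\cL))$ is balanced for some length $\ge m$; by Lemma~\ref{l:nto1} it is then balanced for length~$m$. As $m$ is arbitrary, $\cF(\sigma(\cL))$ is factor-balanced in this case.

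For a general $\sigma$, set $A'=\{a\in A:\sigma(a)\neq\varepsilon\}$ and $A_0=A\setminus A'$, and factor $\sigma=\tau\circ\rho$, where $\rho:A^*\to(A')^*$ erases the letters of $A_0$ and $\tau$ is the (non-erasing) restriction of $\sigma$ to $(A')^*$. Since one checks that $\cF(\sigma(\cL))=\cF(\tau(\cF(\rho(\cL))))$ and $\cF(\rho(\cL))$ is factorial, the non-erasing case applied to $\tau$ reduces the theorem to showing that $\cF(\rho(\cL))$ is factor-balanced. To analyse $\rho$, I would invoke Proposition~\ref{p:frequency} to obtain a frequency vector $(f_a)$ with $||w|_a-f_a|w||$ uniformly bounded on $\cL$, and set $f'=\sum_{b\in A'}f_b$. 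If $f'=0$, then every $w\in\cL$ contains boundedly many $A'$-letters, so $\cF(\rho(\cL))$ is a finite language and there is nothing to prove. If $f'>0$, some $b\in A'$ has $f_b>0$, and balancedness with respect to~$b$ forces every factor of $\cL$ lying in $A_0^*$ to have bounded length~$R$.

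The crux, and the step I expect to be the main obstacle, is then to bound $\big||z|_v-|z'|_v\big|$ for $v=b_1\cdots b_m\in(A')^m$ and $z,z'\in\cF(\rho(\cL))$ of equal length~$L$. Writing $z=\rho(w'')$, $z'=\rho(w''')$ with $w'',w'''\in\cL$ starting and ending in $A'$, each occurrence of $v$ in $z$ arises from an occurrence in $w''$ of a pattern $b_1 t_1\cdots t_{m-1}b_m$ with $t_i\in A_0^*$; the run bound $|t_i|\le R$ leaves only boundedly many such patterns, each of length at most $P:=m+(m-1)R$, so $|z|_v$ is a sum of boundedly many factor-counts $|w''|_u$ with $|u|\le P$ (and likewise for $z'$). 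The bound on $f'$ also shows that $f'|w''|$ is within a constant of $L$, so $|w''|$ and $|w'''|$ differ by at most a constant $D$; truncating the longer preimage to the common length (using factoriality) and applying balancedness for length $|u|$ then bounds each $\big||w''|_u-|w'''|_u\big|$ by a constant. Summing the boundedly many contributions bounds $\big||z|_v-|z'|_v\big|$, proving $\cF(\rho(\cL))$ balanced for length~$m$. The delicate points are precisely that $|z|_v$ reduces to a bounded sum of factor-counts (which needs bounded $A_0$-runs) and that the two preimages have comparable lengths; both genuinely use factor-balancedness through the frequency characterisation, not merely letter-balancedness.
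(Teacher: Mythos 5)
Your proof is correct, but for erasing substitutions it takes a genuinely different route from the paper. The non-erasing case is identical (Proposition~\ref{p:sigmamn} with $u=\varepsilon$, then Lemma~\ref{l:nto1}). For the general case, the paper never factors $\sigma$: it observes that Proposition~\ref{p:sigmamn} as stated already covers erasing substitutions, giving balancedness of $\cF(\sigma(\cL))$ for length $\min_{w\in\cL\cap A^{n-1}}|\sigma(w)|+1$, and that this length tends to infinity with $n$ because, when $\cF(\sigma(\cL))$ is infinite, some letter $a$ with $\sigma(a)\neq\varepsilon$ has unbounded count in $\cL$, so Proposition~\ref{p:frequency} yields $f_a>0$ and $|\sigma(w)|\ge|w|_a\ge f_a|w|-C$. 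You instead split $\sigma=\tau\circ\rho$ with $\rho$ erasing and $\tau$ non-erasing, and re-prove by hand that erasure preserves factor-balancedness; your three key steps (bounded $A_0$-runs from $f_b>0$; the bijection between occurrences of $v$ in $\rho(w'')$ and occurrences in $w''$ of the finitely many patterns $b_1t_1\cdots t_{m-1}b_m$ with $t_i\in A_0^{\le R}$; comparability of preimage lengths plus truncation, which loses at most $\big||w''|-|w'''|\big|$ occurrences per pattern) all check out, so the argument is sound. What the paper's route buys is brevity: it uses the full strength of Proposition~\ref{p:sigmamn}'s statement, which you invoke only through its non-erasing bullet point and then rebuild the rest. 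What your route buys is an isolated, reusable combinatorial fact --- deleting a set of letters whose runs in $\cL$ are bounded preserves factor-balancedness --- at the cost of a substantially longer argument. One small correction to your closing remark: the run bound and the length comparability need only letter-balancedness (via Proposition~\ref{p:frequency}); factor-balancedness of $\cL$ is genuinely used only when comparing the pattern counts $|w''|_u$ and $|w'''|_u$.
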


\begin{proof}
For non-erasing~$\sigma$, the theorem is a direct consequence of Proposition~\ref{p:sigmamn}. 
If $\cF(\sigma(\cL))$ is finite, then it is also factor-balanced. 
If $\cF(\sigma(\cL))$ is infinite, then there exists $a \in A$ such that $|\sigma(a)| \ge 1$ and $\{|w|_a : w \in \cL\}$ is unbounded.
If $\cL$ is letter-$C$-balanced, then Proposition~\ref{p:frequency} gives some $f_a \ge 0$ such that $|w|_a \ge f_a |w| - C$ and thus $|\sigma(w)| \ge f_a |w| - C$ for all $w \in \cL$.
By Proposition~\ref{p:sigmamn}, balancedness of~$\cL$ for length~$n$ implies balancedness of $\cF(\sigma(\cL))$ for length $f_a\, (n{-}1) {-} C {+} 1$.
Note that $f_a > 0$ since $|w|_a \le f_a |w| + C$ and $|w|_a$ is unbounded.
Therefore, factor-balancedness of~$\cL$ implies that of $\cF(\sigma(\cL))$.
\end{proof}

Sometimes it is also possible to infer letter-balancedness of~$\cL$ from that of~$\cF(\sigma(\cL))$. 
Here, the \emph{incidence matrix} of a substitution $\sigma: A^* \to B^*$ is 
\[
\cM_\sigma := (|\sigma(a)|_b)_{b\in B,a\in A}.
\]

\begin{proposition} \label{p:invertible}
Let $\cL \subset A^*$ be a factorial language, $\sigma: A^* \to B^*$ a substitution with invertible incidence matrix~$\cM_\sigma$. 
If $\cF(\sigma(\cL))$ is letter-balanced, then $\cL$ is \mbox{letter-balanced}.
\end{proposition}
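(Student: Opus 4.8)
The plan is to transport a frequency estimate for $\cF(\sigma(\cL))$ back to $\cL$ through the inverse of the incidence matrix, and then read off letter-balancedness via Lemma~\ref{l:frequency2}. First I would record two consequences of the invertibility of $\cM_\sigma$: since the column of $\cM_\sigma$ indexed by $a$ is the Parikh vector $(|\sigma(a)|_b)_{b\in B}$, a zero column would make $\cM_\sigma$ singular, so every $\sigma(a)$ is non-empty, i.e.\ $\sigma$ is non-erasing; moreover $\#A = \#B$. If $\cL$ is finite it is trivially letter-balanced, so I may assume $\cL$ is infinite; being non-erasing, $|\sigma(w)| \ge |w|$, hence $\sigma(\cL)$ and thus $\cF(\sigma(\cL))$ are infinite. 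Applying Proposition~\ref{p:frequency} to the infinite, factorial, letter-$C$-balanced language $\cF(\sigma(\cL))$ yields a frequency vector $(g_b)_{b\in B}$ with $\big||W|_b - g_b|W|\big| \le C$ for all $b \in B$ and all $W \in \cF(\sigma(\cL))$.

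The key step is the Parikh-vector identity $\bv(\sigma(w)) = \cM_\sigma\, \bv(w)$, where $\bv(w) = (|w|_a)_{a\in A}$ and $\bv(\sigma(w)) = (|\sigma(w)|_b)_{b\in B}$, valid because $|\sigma(w)|_b = \sum_{a\in A} |\sigma(a)|_b\,|w|_a$. Fixing $w \in \cL$ and applying the frequency bound to $W = \sigma(w) \in \sigma(\cL) \subseteq \cF(\sigma(\cL))$, I get, with $\mathbf{g} = (g_b)_{b\in B}$, that $\big\|\cM_\sigma\,\bv(w) - |\sigma(w)|\,\mathbf{g}\big\|_\infty \le C$. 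Multiplying by $\cM_\sigma^{-1}$ and setting $\mathbf{h} := \cM_\sigma^{-1}\mathbf{g} = (h_a)_{a\in A}$ gives
\[
\big\|\bv(w) - |\sigma(w)|\,\mathbf{h}\big\|_\infty \le \|\cM_\sigma^{-1}\|_\infty\, C =: C',
\]
that is, $\big||w|_a - |\sigma(w)|\,h_a\big| \le C'$ for every $a \in A$ and $w \in \cL$, where $\|\cdot\|_\infty$ is the sup-norm on vectors and the induced operator norm on matrices.

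It remains to replace the varying quantity $|\sigma(w)|$ by $|w|$. Summing the last estimate over $a$ gives $\big||w| - |\sigma(w)|\, S\big| \le \#A\, C'$ with $S := \sum_{a\in A} h_a$. Since $\langle\sigma\rangle \ge 1$ forces $|\sigma(w)| \le \|\sigma\|\,|w|$, the quotient $|w|/|\sigma(w)|$ is bounded below by $1/\|\sigma\|$; as $|w| \to \infty$ in $S = |w|/|\sigma(w)| - O(1/|\sigma(w)|)$ this yields $S \ge 1/\|\sigma\| > 0$, so I may divide by~$S$. Solving $|\sigma(w)| = (|w| - E)/S$ with $|E| \le \#A\,C'$ and substituting back gives
\[
\Big||w|_a - \frac{h_a}{S}\,|w|\Big| \le C' + \frac{|h_a|}{S}\,\#A\, C'
\]
for all $a \in A$, $w \in \cL$. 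Thus $(f_a)_{a\in A} := (h_a/S)_{a\in A}$ is a frequency vector for $\cL$ in the sense of Lemma~\ref{l:frequency2}, which therefore gives that $\cL$ is letter-balanced. The main obstacle is precisely the non-uniformity of $|\sigma(w)|$: because $\sigma$ need not preserve length, one cannot compare $\sigma(w)$ and $\sigma(w')$ directly as equal-length words, so the argument must route through frequencies and exploit $S \ne 0$ (guaranteed by non-erasingness) to re-express everything in terms of $|w|$.
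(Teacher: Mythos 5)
Your proposal is correct, and its skeleton is the same as the paper's: apply Proposition~\ref{p:frequency} to $\cF(\sigma(\cL))$, pull the frequency vector back through $\cM_\sigma^{-1}$ via the Parikh identity $(|\sigma(w)|_b)_{b\in B} = \cM_\sigma\,(|w|_a)_{a\in A}$, and conclude with Lemma~\ref{l:frequency2}. The genuine difference is your explicit renormalisation step, and there you are in fact more careful than the paper. The paper passes from $\big||\sigma(w)|_b - f_b|\sigma(w)|\big| \le C$ directly to $\big|\sum_{a\in A}|\sigma(a)|_b\,(|w|_a - |w| f'_a)\big| \le C$ with $f' = \cM_\sigma^{-1} f$; since that sum equals $|\sigma(w)|_b - |w| f_b$, this step silently replaces the length $|\sigma(w)|$ by $|w|$, which is not justified as written (the two lengths can differ by an unbounded amount unless $\sigma$ is length-preserving). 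Your argument---sum the coordinate bounds to get $\big||w| - S\,|\sigma(w)|\big| \le \#A\, C'$ with $S = \sum_{a} h_a$, prove $S \ge 1/\|\sigma\| > 0$ using non-erasingness (which you correctly deduce from the invertibility of $\cM_\sigma$, since a zero column would make it singular), and then substitute $|\sigma(w)| = (|w|-E)/S$---is precisely what is needed to make that step rigorous, and it produces the frequency vector $(h_a/S)_{a\in A}$ for $\cL$ itself. You also verify the hypotheses of Proposition~\ref{p:frequency} (infiniteness of $\cF(\sigma(\cL))$, with the finite case of $\cL$ handled trivially), which the paper leaves implicit. Two cosmetic remarks: the upper bound $|\sigma(w)| \le \|\sigma\|\,|w|$ holds for any substitution and does not need $\langle\sigma\rangle \ge 1$ (non-erasingness is what gives $|\sigma(w)| \ge |w|$, hence $|\sigma(w)| \to \infty$, so the error term vanishes); and your final constant $C' + \frac{|h_a|}{S}\,\#A\, C'$ depends on $a$, so take its maximum over the finite alphabet $A$ before invoking Lemma~\ref{l:frequency2}, which asks for a single constant.
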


\begin{proof}
If $\cF(\sigma(\cL))$ is letter-$C$-balanced, then Proposition~\ref{p:frequency} gives a vector $(f_b)_{b \in B}$ such that $||\sigma(w)|_b-f_b\,|\sigma(w)|| \le C$ for all $b \in B$, $w \in \cL$.
Since $\cM_\sigma$ is invertible, we can set $(f'_a)_{a\in A} := \cM_\sigma^{-1} (f_b)_{b \in B}$.
As $|\sigma(w)|_b = \sum_{a\in A} |w|_a\, |\sigma(a)|_b$, we obtain that $\big|\sum_{a\in A} |\sigma(a)|_b\, (|w|_a- |w| f'_a)\big| \le C$ for all $b \in B$, $w \in \cL$.
Then the invertibility of~$\cM_\sigma$ implies that there exists $C'$ such that $||w|_a - f'_a|w|| \le C'$ for all $w \in \cL$, $a \in A$.
Hence, by Lemma~\ref{l:frequency2}, $\cL$~is letter-$(2C')$-balanced.
\end{proof}

\section{$S$-adic languages} \label{sec:s-adic-languages}
Now, we consider sequences of substitutions $\bsigma = (\sigma_k)_{k\ge0}$, $\sigma_k: A_{k+1}^* \to A_k^*$.
We set
\[
\sigma_{[k,n)} := \sigma_k \circ \sigma_{k+1} \circ \cdots \circ \sigma_{n-1}, 
\]
for $n \ge k \ge 0$; then $\sigma_{[k,n)}$ is a substitution from $A_n^*$ to~$A_k^*$. 
The \emph{language of $\bsigma$ at level~$k$} is defined by 
\[
\cL_{\bsigma}^{(k)} := \big\{w \in A_k^* \,:\, w \in \cF(\sigma_{[k,n)}(A_n))\ \mbox{for infinitely many $n > k$}\big\},
\]
and $\cL_{\bsigma} := \cL_{\bsigma}^{(0)}$.
In other papers, the requirement for infinitely many $n > k$ is replaced by ``some $n > k$''; this can change the language only if a letter of~$A_m$ does not occur in~$\sigma_m$. 
Our definition ensures that
\[
\cF\big(\sigma_{[k,n)}(\cL_{\bsigma}^{(n)})\big) = \cL_{\bsigma}^{(k)} \quad \mbox{for all}\ n \ge k \ge 0.
\]

A~sequence of substitutions $(\sigma_k)_{k\ge0}$ is \emph{everywhere growing} if $\lim_{k\to\infty} \langle\sigma_{[0,k)}\rangle = \infty$. 
It is left (resp.\ right) \emph{proper} when for each $k \ge 0$ there exists $n > k$ such that $\sigma_{[k,n)}$ is left (resp.\ right) proper.
The following theorem was proved in \cite[Corollary~5.5]{BCDLPP21} for unimodular incidence matrices, i.e., $|\!\det \cM_{\sigma_k}| = 1$ for all $k \ge 0$. 

\begin{theorem} \label{t:main}
Let $\bsigma$ be a left or right proper sequence of substitutions.
If $\cL_{\bsigma}^{(k)}$ is letter-balanced for infinitely many~$k$, then $\cL_{\bsigma}$ is factor-balanced. 
\end{theorem}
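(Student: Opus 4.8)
The plan is to propagate balancedness downward through the tower of substitutions, starting from a high level where the language is merely letter-balanced and using properness to gain one unit of balancedness length at each descending step. Concretely, fix a target length $N \ge 1$; I want to show that $\cL_{\bsigma} = \cL_{\bsigma}^{(0)}$ is balanced for length $N$. As this holds for every $N$, factor-balancedness follows.

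Assume without loss of generality that $\bsigma$ is left proper (the right proper case being symmetric). First I would build a finite increasing chain of levels $0 = k_0 < k_1 < \cdots < k_{N-1}$ such that each composite block $\sigma_{[k_i,k_{i+1})}$ is left proper: starting from $k_i$, left-properness provides some $k_{i+1} > k_i$ with $\sigma_{[k_i,k_{i+1})}$ left proper, and I iterate this $N-1$ times. Since $\cL_{\bsigma}^{(k)}$ is letter-balanced for infinitely many~$k$, I may then choose such a level $m \ge k_{N-1}$, so that $\cL_{\bsigma}^{(m)}$ is letter-balanced, i.e.\ balanced for length~$1$.

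The engine of the argument is the identity $\cL_{\bsigma}^{(k)} = \cF(\sigma_{[k,n)}(\cL_{\bsigma}^{(n)}))$. I would first absorb the leftover block between level~$m$ and the top checkpoint $k_{N-1}$: applying Proposition~\ref{p:sigma} to the (arbitrary, possibly erasing) substitution $\sigma_{[k_{N-1},m)}$ shows that $\cL_{\bsigma}^{(k_{N-1})} = \cF(\sigma_{[k_{N-1},m)}(\cL_{\bsigma}^{(m)}))$ is again letter-balanced, i.e.\ balanced for length~$1$. Then, descending through the proper blocks, I would apply the left-proper case of Proposition~\ref{p:sigmamn} to each $\sigma_{[k_{i-1},k_i)}$ in turn: if $\cL_{\bsigma}^{(k_i)}$ is balanced for length~$\ell$, then $\cL_{\bsigma}^{(k_{i-1})} = \cF(\sigma_{[k_{i-1},k_i)}(\cL_{\bsigma}^{(k_i)}))$ is balanced for length~$\ell+1$. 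Starting from length~$1$ at level~$k_{N-1}$ and traversing the $N-1$ proper blocks down to $k_0 = 0$ gives that $\cL_{\bsigma}$ is balanced for length $1 + (N-1) = N$, as desired.

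The main obstacle is conceptual rather than computational: pinpointing exactly where the gain of $+1$ comes from and ensuring that the leftover portion between the balanced level~$m$ and the top checkpoint~$k_{N-1}$ costs nothing. The key realisation is that one must \emph{not} require the intermediate substitution $\sigma_{[k_{N-1},m)}$ to be proper or even non-erasing; it need only preserve letter-balancedness, which is precisely what Proposition~\ref{p:sigma} guarantees for an arbitrary substitution. Properness is invoked only on the finitely many blocks $\sigma_{[k_{i-1},k_i)}$, each contributing exactly one additional unit of length, so that $N-1$ such blocks suffice to reach length~$N$. A secondary point to verify is that the balancedness constant, which increases at each application of Proposition~\ref{p:sigmamn}, remains finite; but since reaching a fixed length~$N$ uses only finitely many blocks, the resulting constant $C_N$ is finite, which is all that factor-balancedness demands.
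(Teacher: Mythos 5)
Your proposal is correct and follows essentially the same route as the paper: transfer letter-balancedness down from a high level via Proposition~\ref{p:sigma} (the paper phrases this as letter-balancedness holding at \emph{all} levels), then descend through a chain of proper blocks $\sigma_{[k_i,k_{i+1})}$, gaining one unit of balancedness length per block by the proper case of Proposition~\ref{p:sigmamn}. Your write-up merely makes explicit the fixed-$N$ bookkeeping that the paper's terse proof leaves implicit.
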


\begin{proof}
Assume that $\cL_{\bsigma}^{(k)}$ is letter-balanced for infinitely many~$k$, which implies that it is letter-balanced for all~$k$ by Proposition~\ref{p:sigma}.
Since $\bsigma$ is left or right proper, there exist $0 = k_0 < k_1 < k_2 < \cdots$ such that $\sigma_{[k_i,k_{i+1})}$ is left or right proper for all $i \ge 0$. 
Therefore, by Proposition~\ref{p:sigmamn}, $\cL_{\bsigma}$ is balanced for all lengths $n \ge 1$.
\end{proof}

The following corollary is the particular case of Theorem~\ref{t:main} with constant sequence $\bsigma = (\sigma, \sigma, \dots)$ for some substitution $\sigma: A^* \to A^*$; we write $\sigma^\infty$ for $(\sigma, \sigma, \dots)$.
The \emph{language of a substitution} is $\cL_\sigma := \cL_{\sigma^\infty}$ (and consists of those $w \in A^*$ that are in $\cF(\sigma^n(A))$ infinitely~often).

\begin{corollary} \label{c:main}
Let $\sigma: A^* \to A^*$ be a substitution such that $\sigma^k$ is left or right proper for some $k \ge 1$. 
If $\cL_\sigma$ is letter-balanced, then $\cL_\sigma$ is factor-balanced. 
\end{corollary}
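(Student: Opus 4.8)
The plan is to observe that Corollary~\ref{c:main} is essentially a direct specialisation of Theorem~\ref{t:main}, so the entire task reduces to verifying that the hypotheses of the corollary imply those of the theorem when $\bsigma$ is taken to be the constant sequence $\sigma^\infty = (\sigma,\sigma,\dots)$. First I would record the identification $\cL_{\sigma^\infty}^{(k)} = \cL_\sigma$ for every $k \ge 0$: since all the substitutions in the sequence equal~$\sigma$ and all alphabets equal~$A$, the level-$k$ language does not depend on~$k$, and by definition it equals $\cL_\sigma$. This is the bookkeeping step that lets the single hypothesis ``$\cL_\sigma$ is letter-balanced'' feed into Theorem~\ref{t:main}'s requirement of letter-balancedness at infinitely many levels.

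Next I would check the properness hypothesis. The assumption is that $\sigma^k$ is left or right proper for some fixed $k \ge 1$. For the constant sequence we have $\sigma_{[m,m+k)} = \sigma^k$ for every $m \ge 0$, so for each $m$ the choice $n = m+k$ gives a left or right proper $\sigma_{[m,n)}$. By the definition in the excerpt, this is exactly what it means for $\bsigma = \sigma^\infty$ to be left or right proper. One mild subtlety to flag is that the definition of ``left or right proper sequence'' in the excerpt quantifies properness levelwise (for each $k$ there exists $n>k$ with $\sigma_{[k,n)}$ proper), and I would want to make sure the same handedness---left versus right---can be used uniformly; here it can, because it is the same power $\sigma^k$ occurring at every level, so whichever of the two holds for $\sigma^k$ holds simultaneously for all levels.

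With these two verifications in hand, the proof is immediate: $\sigma^\infty$ is a left or right proper sequence of substitutions, and $\cL_{\sigma^\infty}^{(k)} = \cL_\sigma$ is letter-balanced for all~$k$ (in particular for infinitely many~$k$), so Theorem~\ref{t:main} yields that $\cL_{\sigma^\infty} = \cL_\sigma$ is factor-balanced, which is the desired conclusion. I do not anticipate a genuine obstacle, since this is a reduction rather than a new argument; the only place demanding care is confirming that the constant-sequence language genuinely satisfies the ``infinitely many $n$'' convention used to define $\cL_{\bsigma}^{(k)}$, but this is automatic because $w \in \cF(\sigma^n(A))$ for one large~$n$ propagates to infinitely many~$n$ under iteration of a single substitution. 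I would state the proof in two or three sentences, citing Theorem~\ref{t:main} as the substantive input and presenting the corollary as the advertised special case.
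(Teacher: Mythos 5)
Your proof is correct and takes the same route as the paper, which obtains Corollary~\ref{c:main} precisely as the special case of Theorem~\ref{t:main} for the constant sequence $\sigma^\infty$, using that $\cL_{\sigma^\infty}^{(k)} = \cL_\sigma$ for all $k \ge 0$ and that $\sigma_{[m,m+k)} = \sigma^k$ makes $\sigma^\infty$ a left or right proper sequence. The only blemish is your closing remark that $w \in \cF(\sigma^n(A))$ for one large $n$ ``propagates to infinitely many $n$'': this is false in general (it can fail when some letter of $A$ occurs in no image under $\sigma$), but it is also not needed, since the paper defines $\cL_\sigma := \cL_{\sigma^\infty}$ outright under the ``infinitely many $n$'' convention, so there is nothing to reconcile.
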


For invertible incidence matrices, letter-balancedness at level~$0$ implies letter-balancedness at all levels by Proposition~\ref{p:invertible}, which gives the following corollary of Theorem~\ref{t:main}.

\begin{corollary} \label{c:invertible}
Let $\bsigma = (\sigma_k)_{k\ge0}$ be a left or right proper sequence of substitutions with invertible incidence matrix $\cM_{\sigma_k}$ for all $k \ge 0$.
If $\cL_{\bsigma}$ is letter-balanced, then $\cL_{\bsigma}$ is \mbox{factor-balanced}. 
\end{corollary}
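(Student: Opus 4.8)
The plan is to deduce the corollary directly from Theorem~\ref{t:main} by checking its hypothesis, namely that $\cL_{\bsigma}^{(k)}$ is letter-balanced for infinitely many~$k$. Since $\bsigma$ is already assumed to be left or right proper, once letter-balancedness at (all, or infinitely many) levels is available, Theorem~\ref{t:main} immediately yields factor-balancedness of $\cL_{\bsigma}$. Letter-balancedness of $\cL_{\bsigma} = \cL_{\bsigma}^{(0)}$ is given, so the real work is to propagate it \emph{upward} through the levels, and this is precisely the role of Proposition~\ref{p:invertible}, which inverts the implication of Proposition~\ref{p:sigma} under the hypothesis of an invertible incidence matrix.

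Concretely, I would fix $k \ge 0$ and consider the substitution $\sigma_{[0,k)} \colon A_k^* \to A_0^*$. Its incidence matrix satisfies $\cM_{\sigma_{[0,k)}} = \cM_{\sigma_0}\,\cM_{\sigma_1}\cdots\cM_{\sigma_{k-1}}$, since incidence matrices multiply under composition of substitutions; as each factor $\cM_{\sigma_j}$ is invertible by hypothesis, the product $\cM_{\sigma_{[0,k)}}$ is invertible as well. Next I would invoke the identity $\cF\big(\sigma_{[0,k)}(\cL_{\bsigma}^{(k)})\big) = \cL_{\bsigma}^{(0)} = \cL_{\bsigma}$ recorded right after the definition of the languages (the general identity with its $k$ specialised to~$0$ and its $n$ to~$k$). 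The language $\cL_{\bsigma}^{(k)}$ is factorial, because any factor of a word lying in $\cF(\sigma_{[k,n)}(A_n))$ for infinitely many~$n$ again lies in those same sets. Hence Proposition~\ref{p:invertible}, applied with $\cL = \cL_{\bsigma}^{(k)}$ and $\sigma = \sigma_{[0,k)}$, together with the letter-balancedness of $\cF(\sigma_{[0,k)}(\cL_{\bsigma}^{(k)})) = \cL_{\bsigma}$, gives that $\cL_{\bsigma}^{(k)}$ is letter-balanced. Since $k$ was arbitrary, this holds for every $k$, and in particular for infinitely many~$k$.

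With letter-balancedness at all levels established and $\bsigma$ left or right proper, Theorem~\ref{t:main} applies verbatim and yields that $\cL_{\bsigma}$ is factor-balanced, which completes the argument.

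I do not expect a genuine obstacle: the corollary is essentially the composition of Proposition~\ref{p:invertible} (to climb from level~$0$ up to each level~$k$) with Theorem~\ref{t:main} (to descend to a factor-balanced conclusion). The only points that require a little care are the multiplicativity of incidence matrices under composition, which is what makes the product of invertible matrices invertible, and the implicit infiniteness assumption needed for the application of Proposition~\ref{p:frequency} inside Proposition~\ref{p:invertible}; should $\cL_{\bsigma}$ be finite it is trivially factor-balanced, so one may assume it infinite without loss of generality.
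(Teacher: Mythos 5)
Your proposal is correct and matches the paper's own argument: the paper derives this corollary exactly by using Proposition~\ref{p:invertible} (applied to the compositions $\sigma_{[0,k)}$, whose incidence matrices are invertible as products of invertible matrices) to lift letter-balancedness from level~$0$ to every level~$k$, and then invoking Theorem~\ref{t:main}. Your additional care about factoriality of $\cL_{\bsigma}^{(k)}$ and the finite-language degenerate case only makes explicit what the paper leaves implicit.
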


If $\bsigma$ is not proper, then we need balancedness for length~$2$ to infer factor-balancedness.

\begin{theorem} \label{t:main2}
Let $\bsigma$ be an everywhere growing sequence of substitutions such that $\cL_{\bsigma}^{(k)}$ is balanced for length~$2$ for infinitely many~$k$. 
Then $\cL_{\bsigma}$ is factor-balanced. 
\end{theorem}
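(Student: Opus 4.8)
The plan is to fix an arbitrary target length $m \ge 1$ and prove directly that $\cL_{\bsigma}$ is balanced for length~$m$; since $m$ is arbitrary, factor-balancedness follows. The engine will be Proposition~\ref{p:sigmamn}, which converts balancedness for length~$n$ of a language into balancedness for length $\min_{w \in \cL \cap A^{n-1}}|\sigma(w)| + |u| + 1$ of its image under a substitution~$\sigma$. The crucial point is that for $n=2$ the words $w$ in this minimum are single letters, so $|\sigma(w)|$ becomes large once we compose many substitutions; this is exactly why we assume balancedness for length~$2$ rather than mere letter-balancedness, since for $n=1$ the minimum would range over the empty word alone and yield no gain.

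First I would note that the everywhere-growing hypothesis means $\langle\sigma_{[0,k)}\rangle \to \infty$ (in particular each $\sigma_{[0,k)}$ is non-erasing). Thus, given~$m$, there is a threshold~$K$ with $\langle\sigma_{[0,k)}\rangle \ge m-1$ for all $k \ge K$. Since $\cL_{\bsigma}^{(k)}$ is balanced for length~$2$ for infinitely many~$k$, I can select one such level $k \ge K$.

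Next I would apply Proposition~\ref{p:sigmamn} to the factorial language $\cL = \cL_{\bsigma}^{(k)}$ with $\sigma = \sigma_{[0,k)}$, $n=2$, and $u$ the empty word, which trivially satisfies the prefix condition. Invoking the identity $\cF\big(\sigma_{[0,k)}(\cL_{\bsigma}^{(k)})\big) = \cL_{\bsigma}$, the conclusion is that $\cL_{\bsigma}$ is balanced for length
\[
\min_{a \in \cL_{\bsigma}^{(k)} \cap A_k} \big|\sigma_{[0,k)}(a)\big| + 1 \ge \langle\sigma_{[0,k)}\rangle + 1 \ge m,
\]
where the first inequality holds because minimising over the letters occurring in $\cL_{\bsigma}^{(k)}$ dominates minimising over all of~$A_k$. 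By Lemma~\ref{l:nto1}, $\cL_{\bsigma}$ is balanced for length~$m$, and letting $m$ vary yields factor-balancedness.

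I expect the only delicate point to be the bookkeeping when invoking Proposition~\ref{p:sigmamn}: verifying that the empty word is an admissible~$u$, that the minimum over $\cL_{\bsigma}^{(k)} \cap A_k$ is bounded below by $\langle\sigma_{[0,k)}\rangle$, and that the level-$k$-to-level-$0$ identity applies. There is no analytic obstacle; the entire content is the observation that length-$2$ balancedness is precisely the input that lets the growth $\langle\sigma_{[0,k)}\rangle \to \infty$ be turned into arbitrarily large balancedness lengths in a single application of the proposition, bypassing the iterative ``$+1$ per proper block'' mechanism of Theorem~\ref{t:main}.
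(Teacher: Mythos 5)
Your proof is correct and follows essentially the same route as the paper: apply Proposition~\ref{p:sigmamn} with $n=2$, $\sigma = \sigma_{[0,k)}$ and $u$ empty to get balancedness of $\cL_{\bsigma}$ for length $\langle\sigma_{[0,k)}\rangle + 1$, then use the everywhere-growing hypothesis to make this length arbitrarily large. The paper's two-sentence proof is just a condensed version of your argument, leaving implicit the bookkeeping (choice of level~$k$, the identity $\cF(\sigma_{[0,k)}(\cL_{\bsigma}^{(k)})) = \cL_{\bsigma}$, and the appeal to Lemma~\ref{l:nto1}) that you spell out.
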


\begin{proof}
By Proposition~\ref{p:sigmamn}, balancedness of $\cL_{\bsigma}^{(k)}$ for length~$2$ implies balancedness of $\cL_{\bsigma}$ for length $\langle \sigma_{[0,k)} \rangle + 1$. 
Since $\bsigma$ is everywhere growing, this implies that $\cL_{\bsigma}$ is factor-balanced. 
\end{proof}

For primitive substitutions $\sigma$, a sufficient condition for balancedness for length~$n$ of~$\cL_\sigma$ is given in \cite[Theorem~22]{Adamczewski03}, and it indicates that balancedness for length~$2$ and factor-balancedness are closely related; see also \cite[Section~5.4.3]{Queffelec10}.
We prove that balancedness for length~$2$ implies factor-balancedness for most substitutions.

\begin{corollary}
Let $\sigma: A^* \to A^*$ be an everywhere growing substitution.
If $\cL_\sigma$ is balanced for length~$2$, then $\cL_\sigma$ is factor-balanced. 
\end{corollary}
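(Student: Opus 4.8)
The plan is to recognise this corollary as the special case of Theorem~\ref{t:main2} obtained by taking the constant sequence $\bsigma = \sigma^\infty = (\sigma, \sigma, \dots)$. First I would observe that, for this constant sequence, every level-$k$ language coincides with~$\cL_\sigma$: since $\sigma_{[k,n)} = \sigma^{n-k}$ and $A_k = A$ for all~$k$, we have
\[
\cL_{\sigma^\infty}^{(k)} = \big\{w \in A^* : w \in \cF(\sigma^{n-k}(A))\ \mbox{for infinitely many}\ n > k\big\} = \cL_\sigma .
\]
Thus the hypothesis that $\cL_\sigma$ is balanced for length~$2$ says exactly that $\cL_{\sigma^\infty}^{(k)}$ is balanced for length~$2$ for every $k \ge 0$, in particular for infinitely many~$k$, which is the second hypothesis of Theorem~\ref{t:main2}.

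Next I would check the everywhere-growing hypothesis. Saying that the substitution~$\sigma$ is everywhere growing means precisely that the constant sequence~$\sigma^\infty$ is everywhere growing in the sense of the definition given before Theorem~\ref{t:main}, namely that $\lim_{k\to\infty} \langle \sigma_{[0,k)}\rangle = \lim_{k\to\infty} \langle \sigma^k\rangle = \infty$. Hence both hypotheses of Theorem~\ref{t:main2} are satisfied for $\bsigma = \sigma^\infty$.

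Applying Theorem~\ref{t:main2} then yields that $\cL_{\sigma^\infty} = \cL_\sigma$ is factor-balanced, which is the assertion. There is essentially no obstacle here: the corollary is a direct specialisation, and the only points to record carefully are the identification $\cL_{\sigma^\infty}^{(k)} = \cL_\sigma$ for all~$k$ and the reading of ``everywhere growing substitution'' as ``everywhere growing constant sequence'', both of which follow immediately from the definitions.
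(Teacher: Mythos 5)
Your proposal is correct and is exactly the argument the paper intends: the corollary is stated as an immediate specialisation of Theorem~\ref{t:main2} to the constant sequence $\sigma^\infty$, with $\cL_{\sigma^\infty}^{(k)} = \cL_\sigma$ for all $k$ and ``everywhere growing substitution'' read as ``everywhere growing constant sequence''. The paper leaves these verifications implicit; you have simply written them out.
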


We remark that, in Theorem~\ref{t:main}, Corollary~\ref{c:invertible} and Theorem~\ref{t:main2}, factor-balancedness holds not only for~$\cL_{\bsigma}$ but for all $\cL_{\bsigma}^{(k)}$, $k \ge 0$.

\section{Thue--Morse--Sturmian languages} \label{sec:thue-morse-sturmian}

We conclude the paper by studying a special class of sequences of substitutions that occurs naturally in \cite{Steiner20,Komornik-Steiner-Zou}.
A~language $\cL_{\bsigma}$, $\bsigma \in \{L,M,R\}^\infty$, with substitutions
\[
\begin{aligned}
L:\ & 0 \mapsto 0, & \quad M:\ & 0 \mapsto 01, & \quad R:\ & 0 \mapsto 01, \\
& 1 \mapsto 10, & & 1 \mapsto 10, & & 1 \mapsto 1,
\end{aligned}
\]
is \emph{Thue--Morse--Sturmian} if $\bsigma$ is primitive.
Recall that a sequence of substitutions $(\sigma_n)_{n\ge0}$ is \emph{primitive} if, for each $k \ge 0$, there exists $n > k$ such that $|\sigma_{[k,n)}(a)|_b \ge 1$ for all $a \in A_n$, $b \in B_k$; in the case of $\bsigma \in \{L,M,R\}^\infty$, this means that $\bsigma$ does not end with the constant sequence $L^\infty$ or $R^\infty$. 
However, the following results also hold for non-primitive sequences.

\begin{proposition} \label{p:TMSbalanced}
For all $\bsigma \in \{L,M,R\}^\infty$, $\cL_{\bsigma}$~is letter-2-balanced.
\end{proposition}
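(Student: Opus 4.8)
The plan is to reduce letter-$2$-balancedness to a one-sided discrepancy bound and then to exploit the special arithmetic of the three incidence matrices. Writing $\bv(w)=(|w|_0,|w|_1)$ and letting $\mathbf{f}^{(k)}=(f^{(k)}_0,f^{(k)}_1)$, $f^{(k)}_0+f^{(k)}_1=1$, be the letter-frequency vector at level~$k$ (determined by $\mathbf{f}^{(k)}\propto\cM_{\sigma_k}\mathbf{f}^{(k+1)}$, whose existence is standard), I set
\[
\delta^{(k)}(w):=|w|_0-f^{(k)}_0\,|w|=\det\!\big(\bv(w)\,\big|\,\mathbf{f}^{(k)}\big),
\]
the right-hand side being the $2\times2$ determinant with these columns, the equality using $f^{(k)}_0+f^{(k)}_1=1$. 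Since $|w|_1-f^{(k)}_1|w|=-\delta^{(k)}(w)$, it suffices by Lemma~\ref{l:frequency2} (with $C=1$) to prove the uniform bound $|\delta^{(k)}(w)|\le1$ for every $k$ and every $w\in\cL_{\bsigma}^{(k)}$; this even yields letter-$2$-balancedness at all levels.

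Two arithmetic facts drive everything. First, $\det\cM_L=\det\cM_R=1$ while $\det\cM_M=0$, and $\cM_M\mathbf{f}^{(k+1)}=(1,1)$, so $\sigma_k=M$ forces $\mathbf{f}^{(k)}=(\tfrac12,\tfrac12)$. Second, a direct computation of $\mathbf{f}^{(k)}$ gives the contraction factor $c_k:=\det(\cM_{\sigma_k})\big/\sum_a f^{(k+1)}_a|\sigma_k(a)|$ equal to $f^{(k)}_0$ for $L$, to $f^{(k)}_1$ for $R$, and to $0$ for $M$; in particular $0\le c_k\le1$. Combining $\bv(\sigma_k v)=\cM_{\sigma_k}\bv(v)$, the proportionality of $\mathbf{f}^{(k)}$ to $\cM_{\sigma_k}\mathbf{f}^{(k+1)}$, and multiplicativity of the determinant gives the transfer identity $\delta^{(k)}(\sigma_k(v))=c_k\,\delta^{(k+1)}(v)$ for exact images; images under $M$ are thus perfectly balanced.

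Next I would desubstitute a given $w=w_0$. As $\|\sigma_k\|=2$, each step writes $w_k=p_k\,\sigma_k(w_{k+1})\,s_k$ with overhangs $p_k,s_k$ of length at most~$1$, and the transfer identity upgrades to $\delta^{(k)}(w_k)=\beta_k+c_k\,\delta^{(k+1)}(w_{k+1})$, where $\beta_k=\det(\bv(p_k)\mid\mathbf{f}^{(k)})+\det(\bv(s_k)\mid\mathbf{f}^{(k)})$. Unrolling and using $c_m=0$ at the first index~$m$ with $\sigma_m=M$ collapses the tail, leaving the finite sum $\delta^{(0)}(w_0)=\sum_{k=0}^{m}\big(\prod_{j<k}c_j\big)\beta_k$ with $\mathbf{f}^{(m)}=(\tfrac12,\tfrac12)$; the remaining case $\bsigma\in\{L,R\}^\infty$ is a (possibly degenerate) Sturmian language, already $1$-balanced by Morse--Hedlund. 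Crucially, the overhang contributions are oppositely signed: for~$L$ one has $p_k\in\{\varepsilon,0\}$ and $s_k\in\{\varepsilon,1\}$, whence $\det(\bv(p_k)\mid\mathbf{f}^{(k)})\in\{0,f^{(k)}_1\}$ and $\det(\bv(s_k)\mid\mathbf{f}^{(k)})\in\{0,-f^{(k)}_0\}$, and symmetrically for~$R$.

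The main obstacle is to convert these ingredients into the sharp bound $|\delta^{(0)}(w_0)|\le1$. The naive estimate $|\delta^{(k)}|\le|\beta_k|+c_k|\delta^{(k+1)}|$ is too weak, since $c_k=f^{(k)}_0$ is typically close to~$1$ right after an $L$-step, and it would permit values such as $-\tfrac43$. What rescues the bound is that the extremal overhang and the extremal value of $\delta^{(k+1)}$ cannot occur together: an overhang $s_k=1$ after an $L$-block forces the next desubstituted letter to be~$1$, which clashes with the factor making $\delta^{(k+1)}$ extremal (for instance $111$ is not a Thue--Morse factor). I therefore expect the real work to be a level-by-level induction carrying an \emph{asymmetric} invariant that couples each admissible overhang with a restriction on the first and last letters of $w_{k+1}$, so that the signed sum telescopes within $[-1,1]$; pushing this bookkeeping through the $L/R$ run below the first~$M$ is the heart of the proof.
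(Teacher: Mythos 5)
Your setup is computationally sound as far as it goes: the determinant formula for $\delta^{(k)}$, the transfer identity $\delta^{(k)}(\sigma_k(v))=c_k\,\delta^{(k+1)}(v)$ with $c_k=\det\cM_{\sigma_k}\big/\sum_a f^{(k+1)}_a|\sigma_k(a)|$, the values $c_k=f^{(k)}_0$ for $L$, $f^{(k)}_1$ for $R$, $0$ for $M$, the collapse of the unrolled sum at the first $M$, and the inventory of overhang contributions are all correct (and the existence of the vectors $\mathbf{f}^{(k)}$ can indeed be justified for this particular alphabet, since every tail of $\bsigma$ forces a unique nested-cone limit). But the proof has a genuine gap, and you name it yourself: everything hinges on the sharp bound $|\delta^{(0)}(w_0)|\le 1$, you show that the naive term-by-term estimate of $\sum_{k\le m}\big(\prod_{j<k}c_j\big)\beta_k$ cannot deliver it (your own example allows $-\tfrac43$), and the ``asymmetric invariant'' coupling each overhang with constraints on the boundary letters of $w_{k+1}$ is only announced (``I expect the real work to be\dots''), never formulated, let alone proved. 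That induction \emph{is} the proposition; what you have is a framework plus a conjecture. Note also that you have committed to a statement strictly stronger than needed: letter-2-balancedness gives back only $|\delta|\le 2$ via Proposition~\ref{p:frequency}, so if your invariant bookkeeping fails at the constant $1$ there is no obvious fallback, and you would have to redo the analysis with a weaker target and then lose the factor of $2$ from Lemma~\ref{l:frequency2}.

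For contrast, the paper avoids discrepancy entirely and replaces your missing induction by a finite combinatorial reduction. Taking $n$ minimal with $\sigma_n=M$, one checks by an easy induction (using $\sigma_k\in\{L,R\}$ for $k<n$) that $\sigma_{[0,n)}(01)=01w$ and $\sigma_{[0,n)}(10)=10w$ for one and the same word $w$, and that $\cL_\bsigma\subseteq\cF(\sigma_{[0,n)}(\{01,10\}^*))$. Given two words $v,v'$ of equal length at least $|w|+3$, each contains a factor with abelianization $(|w|_0+1,|w|_1+1)$ (a copy of $01w$ or $10w$), and excising these factors leaves shorter words in the language with the same letter-count difference; iterating reduces to lengths at most $|w|+2$, where the words either lie in the Sturmian-type language $\cF(\sigma_{[0,n)}(\{01\}^*))$, which is letter-1-balanced, or are settled by inspection at length exactly $|w|+2$. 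The remaining case $\bsigma\in\{L,R\}^\infty$ is handled as you suggest. If you want to salvage your approach, the heart of the matter is precisely the lemma you postponed, and you should expect its proof to encode combinatorial information (such as which letters can flank a desubstituted word) that the paper's excision argument captures for free.
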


\begin{proof}
For $\bsigma = (\sigma_k)_{k\ge0} \in \{L,M,R\}^\infty \setminus \{L,R\}^\infty$, let $n \ge 1$ be minimal such that $\sigma_n = M$.
We claim that $\cF(\sigma_{[0,n)}(\{01,10\}^*))$ is letter-2-balanced. 
Indeed, we have $\sigma_{[0,n)}(01) = 01w$ and $\sigma_{[0,n)}(10) = 10w$ for some $w \in \{0,1\}^*$. (This property is trivial for $\sigma_{[n,n)}$ and can be shown inductively for~$\sigma_{[k,n)}$, $0 \le k < n$, since $\sigma_k \in \{L,R\}$.) 
Let $v, v' \in \cF(\sigma_{[0,n)}(\{01,10\}^*))$ with $|v| = |v'|$. 
If $|v| \ge |w| {+} 3$ then we can write $v = pus$, $v'=p'u's'$ such that $|u|_0 = |u'|_0 = |w|_0{+}1$, $|u|_1 = |u'|_1 = |w|_1{+}1$ and $ps, p's' \in \cF(\sigma_{[0,n)}(\{01,10\}^*))$.
Since $|ps| = |p's'|$ and $|ps|_0 - |p's'|_0 = |v|_0 - |v'|_0$, it is sufficient to consider $|v| \le |w|{+}2$.
If $|v| \le |w|{+}1$, then $v, v' \in \cF(\sigma_{[0,n)}(\{01\}^*))$, and it is well known that this language is letter-1-balanced; see \cite[Chapter~2]{Lothaire02}.
For $|v| = |w|{+}2$, we have $|v|_0 = |w|_0{+}1$ or $v \in \{0w0, 1w1\}$. 
Therefore, $\cF(\sigma_{[0,n)}(\{01,10\}^*))$ and thus $\cL_{\bsigma}$ are letter-2-balanced.

Let now $\bsigma \in \{L,R\}^\infty$.
If $\bsigma$ contains infinitely many $L$'s and $R$'s, then $\cL_{\bsigma}$ is Sturmian and thus letter-2-balanced; see e.g.\ \cite[Chapter~2]{Lothaire02}.
Since $L^n(0) \,{=}\, 0$, $L^n(1) \,{=}\, 10^n$, $R^n(0) \,{=}\, 01^n$, $R^n(1) \,{=}\, 1$, for all $n \,{\ge}\, 0$, the languages $\cL_{L^\infty}$ and~$\cL_{R^\infty}$ are also letter-1-balanced.
Finally, if $\sigma_n = L$, $\sigma_k = R$ for all $k > n$, or $\sigma_n = R$, $\sigma_k = L$ for all $k > n$, then $\cL_{\bsigma} = \cF(\sigma_{[0,n)}(\{01\}^*))$, which is again letter-1-balanced.
\end{proof}

For a characterisation of factor-balancedness of Thue--Morse--Sturmian languages, we need the following lemma in order to show that applying any composition of substitutions $L, M, R$ to the Thue--Morse language, which is not factor-balanced, does not create a factor-balanced language.  
More precisely, we show for $\sigma \in \{L,M,R\}^*$ that $\sigma(011)$ occurs only trivially in $\sigma(w)$, $w \in \cL_{M^\infty}$.
Here, $\{L,M,R\}^*$ is the set of compositions of substitutions in $L, M, R$. 

\begin{lemma} \label{l:rec}
Let $\sigma \in \{L,M,R\}^*$, $a,b \in \{0,1\}$, $p,s,v \in \{0,1\}^*$, $k \ge 1$, such that 
\[
\sigma(avb) = p\, \sigma(01^k)\, s, 
\]
$p$ is a strict prefix of $\sigma(a)$ and $s$ is a strict suffix of~$\sigma(b)$.
Then $avb = 01^k$ (and $p,s$ are~empty) or $avb = 1^{k+1}$ (and $s$ is empty).  
\end{lemma}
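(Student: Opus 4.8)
The plan is to reduce the statement to a cleaner ``interior'' equation and then analyse that equation through the prefix‑code structure of $L,M,R$. Since $p$ is a strict prefix of $\sigma(a)$ and $s$ a strict suffix of $\sigma(b)$, I would write $\sigma(a)=p\,q$ and $\sigma(b)=r\,s$ with $q,r$ nonempty, and cancel $p,s$ in $\sigma(avb)=p\,\sigma(01^k)\,s$. This turns the hypothesis into the equivalent statement
\[
\sigma(0)\,\sigma(1)^k = q\,\sigma(v)\,r,\qquad q \text{ a nonempty suffix of }\sigma(a),\ r\text{ a nonempty prefix of }\sigma(b).
\]
Observe that the conclusion $avb\in\{01^k,1^{k+1}\}$ is the same as $v=1^{k-1}$, $b=1$ and $a\in\{0,1\}$, with $p,s$ empty when $a=0$ and $s$ empty when $a=1$; so it suffices to pin down $v$, $b$, and $a$ in this form.

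Next I would record the one structural fact that drives everything: for every $\sigma\in\{L,M,R\}^*$ and every letter $c$, the word $\sigma(c)$ begins with $c$. This holds for $L,M,R$ and is preserved under composition, hence in general. Consequently $\{\sigma(0),\sigma(1)\}$ is a prefix code, $\sigma$ is injective, and every word $\sigma(w)$ has a unique factorisation into the blocks $\sigma(0),\sigma(1)$, read greedily from the left. In particular $\sigma(0)\sigma(1)^k$ consists of one $\sigma(0)$‑block followed by $k$ identical $\sigma(1)$‑blocks, the tail $\sigma(1)^k$ being purely periodic of period $|\sigma(1)|$.

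The heart of the argument is to show that the occurrence of $\sigma(v)$ is \emph{block‑aligned}, i.e.\ that $q$ is a prefix of $\sigma(01^k)$ ending at a block boundary; equivalently $q=\sigma(01^j)$ for some $j\ge 0$. Granting this, the proof finishes by a short computation: cancelling $\sigma(01^j)$ leaves $\sigma(1^{k-j})=\sigma(v)\,r$, so the prefix code forces $v=1^{i}$ and $r=\sigma(1^{k-j-i})$; since $r$ then starts with the letter $1$ and is a prefix of the single block $\sigma(b)$, one gets $b=1$ and $k-j-i=1$; finally $q=\sigma(01^j)$ being a suffix of the single block $\sigma(a)$ forces $j=0$, with $a=0$ always admissible and $a=1$ admissible exactly when $\sigma(0)$ is a suffix of $\sigma(1)$. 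This yields precisely $avb=01^k$ or $avb=1^{k+1}$, the second alternative matching $\sigma=L$, where $L(0)=0$ is a suffix of $L(1)=10$.

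The main obstacle is establishing block‑alignment, i.e.\ ruling out an occurrence of the full image $\sigma(v)$ that begins strictly inside a block of $\sigma(01^k)$. This is a synchronisation (recognisability) statement, and it is genuinely delicate because unaligned image occurrences do exist in general: for instance $M(0)=01$ occurs at the interior position $3$ of $M(011)=011010$. I would prove alignment by induction on the number of generators composing $\sigma$. The left boundary transfers cleanly: peeling off the outer generator $\tau$, the prefix‑code property shows that a prefix cut of $\tau(\cdot)$ descends to a prefix cut for the shorter substitution. The delicate half is the right boundary, where $L$ and $R$ are \emph{not} suffix codes; here I would exploit the periodicity of the tail $\sigma(1)^k$, so that an unaligned image occurrence clashes with period $|\sigma(1)|$ except for a single boundary phenomenon governed by the last letters of the images of $L,M,R$ — and that phenomenon is exactly what produces the extra solution $1^{k+1}$. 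Controlling this right‑boundary synchronisation uniformly over the three generators and over all $k$ is the step I expect to require the most care.
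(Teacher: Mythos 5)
Your reduction to the equation $\sigma(0)\,\sigma(1)^k = q\,\sigma(v)\,r$ (with $q$ a nonempty suffix of $\sigma(a)$, $r$ a nonempty prefix of $\sigma(b)$) is correct, and the closing computation you give \emph{assuming} block-alignment is also correct and does yield exactly the two alternatives $avb = 01^k$ and $avb = 1^{k+1}$. But this isolates rather than resolves the difficulty: the block-alignment claim $q = \sigma(01^j)$ is equivalent to the lemma itself (your last step derives the lemma from it in a few lines, and conversely the lemma gives $q = \sigma(0)$ immediately). For that claim you offer only a strategy — induction peeling the \emph{outermost} generator, with a periodicity argument on the tail $\sigma(1)^k$ for the right boundary — and you explicitly flag the right-boundary synchronisation as the step requiring the most care, without carrying it out. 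So the proposal amounts to a clean reformulation of the lemma plus a proof of the easy implication; the core is missing.

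The sketched strategy also has concrete obstacles beyond being incomplete. First, even the ``clean'' left boundary is not clean: prefix-codeness of $\{\sigma(0),\sigma(1)\}$ only propagates alignment \emph{rightward from an already aligned position}; it does not force an internal occurrence to start at a block boundary, as your own example ($M(0)$ occurring at an odd position of $M(011)$) shows. Second, peeling the outer generator $\tau$ in $\sigma = \tau\circ\sigma'$ requires a synchronisation statement for $\tau$ applied to the word $\sigma'(01^k)$, whose local structure is uncontrolled: when $\sigma'$ contains $R$'s it has arbitrarily long runs of $1$'s, so the hoped-for ``single boundary phenomenon'' is not a uniform fact about the three generators, and the periodicity argument degenerates for small $k$. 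The paper runs the induction in the opposite direction, peeling the \emph{innermost} generator $\sigma_n$: then the pattern $01^k$ is replaced by an explicit word ($L(01^k)=0(10)^k$, $M(01^k)=01(10)^k$, $R(01^k)=01^{k+1}$), the induction hypothesis applies to the outer composition $\sigma_{[0,n)}$ with a short pattern ($01$, $011$, or $01^{k+1}$), and the spurious $1^{k'+1}$ alternative is eliminated by elementary factor-avoidance facts ($11\notin\cF(L(w))$, $111\notin\cF(M(w))$, $1^{k+2}\notin\cF(R(w))$), with no recognisability or Fine--Wilf-type input needed. Reversing the direction of your induction in this way is the natural repair.
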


\begin{proof}
The statement is clearly true when $\sigma$ is the identity. 
For $\sigma = \sigma_0 \circ \sigma_1 \circ \cdots \circ \sigma_n$, $\sigma_i \in \{L,M,R\}$, we prove the statement by induction on~$n$. 

Let first $\sigma_n = L$.
Then we have 
\[
\sigma_{[0,n)}(L(avb)) = p\, \sigma_{[0,n)}(0(10)^n)\, s \quad \mbox{and} \quad 11 \notin \cF(L(avb)).
\]
If $p$ is a strict prefix of $\sigma_{[0,n)}(a)$, in particular if $a=0$, then $\sigma_{[0,n)}(av'b') = p\, \sigma_{[0,n)}(01)\, s'$ for a prefix $av'b'$ of $L(avb)$ and a strict suffix $s'$ of~$\sigma_{[0,n)}(b')$. 
By the induction hypothesis and since $11 \not\in \cF(L(av'b'))$, this implies that $p$ is empty. 
Since $\sigma_{[0,n)}(0)$ starts with~$0$ and $\sigma_{[0,n)}(1)$ starts with~$1$, we obtain that $L(avb) = 0(10)^n$ and thus $avb = 01^n$. 
If $a=1$ and $p = \sigma_{[0,n)}(1)\, p'$, then $\sigma_{[0,n)}(0v'b') = p' \sigma_{[0,n)}(01)\, s'$ for a prefix $10v'b'$ of $L(1vb)$ and a strict suffix $s'$ of~$\sigma_{[0,n)}(b')$.  
Now, the induction hypothesis implies that $p'$ is empty, thus $L(avb) = (10)^{n+1}$, i.e., $avb = 1^{n+1}$. 

Let now $\sigma_n = M$. 
Then we have 
\[
\sigma_{[0,n)}(M(avb)) = p\, \sigma_{[0,n)}(01(10)^n)\, s \quad \mbox{and} \quad 111 \notin \cF(M(avb)).
\]
If $p$ is a strict prefix of $\sigma_{[0,n)}(a)$, then $\sigma_{[0,n)}(av'b') = p\, \sigma_{[0,n)}(011)\, s'$, hence $p$ is empty, and $avb = 01^n$. 
If $a=1$ and $p = \sigma_{[0,n)}(1)\, p'$, then we obtain that $avb = 1^{n+1}$. 
If $a=0$ and $p = \sigma_{[0,n)}(0)\, p'$, then $\sigma_{[0,n)}(1v'b') = p' \sigma_{[0,n)}(011)\, s'$, hence $p'$ is empty, which contradicts that $\sigma_{[0,n)}(1)$ and $\sigma_{[0,n)}(0)$ start with different letters. 

Finally, let $\sigma_n = R$. 
Then we have 
\[
\sigma_{[0,n)}(R(avb)) = p\, \sigma_{[0,n)}(01^{n+1})\, s \quad \mbox{and} \quad 1^{n+2} \notin \cF(R(avb)).
\]
If $p$ is a strict prefix of $\sigma_{[0,n)}(a)$, then $R(avb) = 01^{n+1}$, thus $avb = 01^n$. 
Otherwise, we have $a=0$ and $p = \sigma_{[0,n)}(0)\, p'$, thus $\sigma_{[0,n)}(1v'b') = p' \sigma_{[0,n)}(01^{n+1})\, s'$, hence $p'$ is empty, which contradicts that $\sigma_{[0,n)}(1)$ and $\sigma_{[0,n)}(0)$ start with different letters. 
\end{proof}

\begin{theorem} \label{t:TMSfactors}
Let $\bsigma = (\sigma_k)_{k\ge0} \in \{L,M,R\}^\infty$. 
Then $\cL_{\bsigma}$ is factor-balanced if and only if $\sigma_k \ne M$ for infinitely many~$k$.
\end{theorem}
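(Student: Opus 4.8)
The plan is to prove the two implications separately, after recasting the condition ``$\sigma_k\ne M$ for infinitely many $k$'' as a properness statement about $\bsigma$. First I would record that $L$ and $R$ are right proper (the images of $0$ and $1$ end in the same letter, $0$ for $L$ and $1$ for $R$), while $M$ is neither left nor right proper. The elementary point is that a composition of non-erasing substitutions is right proper as soon as \emph{one} of its factors is: if $\sigma_j$ is right proper then $\sigma_{[k,n)}=\sigma_{[k,j)}\circ\sigma_j\circ\sigma_{[j+1,n)}$ has all letter images ending in the same letter. Hence $\sigma_{[k,n)}$ is right proper exactly when some $\sigma_k,\dots,\sigma_{n-1}$ lies in $\{L,R\}$, and $\bsigma$ is a right proper sequence exactly when $\sigma_k\in\{L,R\}$ for infinitely many~$k$, i.e.\ when $\sigma_k\ne M$ infinitely often.

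For ``$\Leftarrow$'', suppose $\sigma_k\ne M$ infinitely often, so $\bsigma$ is right proper by the above. Proposition~\ref{p:TMSbalanced}, applied to each shifted sequence $(\sigma_k,\sigma_{k+1},\dots)$, shows that every level $\cL_{\bsigma}^{(k)}$ is letter-$2$-balanced. Theorem~\ref{t:main} then gives that $\cL_{\bsigma}$ is factor-balanced. This direction is essentially bookkeeping on top of the two cited results.

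For ``$\Rightarrow$'' I would argue by contraposition: assume $\sigma_k=M$ for all $k\ge N$, so that $\cL_{\bsigma}^{(N)}=\cL_{M^\infty}$ is the Thue--Morse language and $\cL_{\bsigma}=\cF\big(\sigma_{[0,N)}(\cL_{M^\infty})\big)$ with $\sigma_{[0,N)}\in\{L,M,R\}^*$. The input I would use here is that $\cL_{M^\infty}$ is not factor-balanced, with the relevant witness being the factor $011$: there are words of equal length in $\cL_{M^\infty}$ whose numbers of occurrences of $011$ differ unboundedly. To transport this imbalance through $\sigma_{[0,N)}$, I would apply Lemma~\ref{l:rec} with $k=2$. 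Since $111=1^{k+1}\notin\cF(\cL_{M^\infty})$, the lemma forces every occurrence of $\sigma_{[0,N)}(011)$ inside $\sigma_{[0,N)}(w)$ (for $w\in\cL_{M^\infty}$) to be trivial, i.e.\ to come from an actual occurrence of $011$ in~$w$. Therefore $\big|\sigma_{[0,N)}(w)\big|_{\sigma_{[0,N)}(011)}=|w|_{011}+O(1)$, and feeding the two witnessing families through this identity shows that $\cL_{\bsigma}$ is not balanced with respect to $\sigma_{[0,N)}(011)$, hence not factor-balanced. To keep the two images at equal length I would pad by a bounded amount, using that equal-length Thue--Morse factors have matching letter counts up to a constant, so $|\sigma_{[0,N)}(w)|$ and $|\sigma_{[0,N)}(w')|$ agree up to $O(1)$.

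The hard part is this last direction, and within it two points need care. The first is the non-factor-balancedness of $\cL_{M^\infty}$ itself (the short proof promised in the introduction), which must single out $011$ as a factor with unbounded discrepancy. The second, where Lemma~\ref{l:rec} does the real work, is the precise occurrence count: one must exclude the spurious, non-block-aligned occurrences of $\sigma_{[0,N)}(011)$, and the escape clause $avb=1^{k+1}$ of the lemma is exactly what is killed by $111\notin\cF(\cL_{M^\infty})$. Finally I would check that the bounded length corrections coming from the images and from the boundary words $x,z$ of Lemma~\ref{l:wxyz} do not absorb the divergence of $|w|_{011}-|w'|_{011}$, so that the imbalance genuinely survives in $\cL_{\bsigma}$.
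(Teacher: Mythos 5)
Your skeleton matches the paper's proof in both directions, and your forward implication is sound: right properness of $\bsigma$ follows because composing non-erasing substitutions with one right proper factor ($L$ or $R$) yields a right proper composition, Proposition~\ref{p:TMSbalanced} applied to the shifted sequences gives letter-balancedness at every level, and Theorem~\ref{t:main} concludes. (The paper additionally splits off the sequences ending in $L^\infty$, $R^\infty$, $LR^\infty$, $RL^\infty$, but since Theorem~\ref{t:main} does not require the sequence to be everywhere growing, your uniform treatment is legitimate and in fact tidier.)

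The genuine gap is in the converse. You take as ``input'' that $\cL_{M^\infty}$ has unbounded discrepancy, among words of equal length, for the specific factor $011$, and you flag it as ``the short proof promised in the introduction'' --- but you never supply it, and it is precisely the technical core of the paper's argument. The paper proves it by constructing recursive witnesses $w_n, w'_n \in \cL_{M^\infty}$ of equal length, passing to $2$-codings, and analysing the eigenvectors of the incidence matrix of the induced $2$-block substitution $M_2$, which yields $\bl_2(w_{2n}) - \bl_2(w'_{2n}) = n\,\bv_{-1}$; this gives the linearly growing imbalance in $11$ (hence in $011$, via $0 \le |w|_{11} - |w|_{011} \le 1$) and, as a bonus, witnesses whose letter counts already agree. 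Note that the literature statement ``the Thue--Morse shift is not factor-balanced'' \cite{Berthe-Cecchi19} does not directly deliver what you need: Lemma~\ref{l:nto1} transfers unbalancedness \emph{upward} in length, so knowing some factor is unbalanced does not pin the imbalance on a word of length $2$ or $3$, let alone on $011$ --- and $011$ is forced, since Lemma~\ref{l:rec} is only effective for $01^k$ with $1^{k+1} \notin \cF(\cL_{M^\infty})$, i.e.\ $k=2$. One could close the gap without the paper's computation by invoking the contrapositive of Theorem~\ref{t:main2} ($M^\infty$ is everywhere growing, so non-factor-balancedness forces unbalancedness already for length~$2$) together with a counting/symmetry argument among $00,01,10,11$ to reach $11$; but some such argument must actually be given. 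The remainder of your converse --- Lemma~\ref{l:rec} with $k=2$ killing non-block-aligned occurrences of $\sigma_{[0,N)}(011)$, and padding via letter-balancedness to compare images of equal length --- is correct and is a harmless variant of the paper's restriction to witnesses with equal letter counts.
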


\begin{proof}
By Proposition~\ref{p:TMSbalanced}, $\cL_{\bsigma}^{(k)}$ is letter-balanced for all $k \ge 0$. 
If $\bsigma$ does not end with~$M^\infty$, then it is right proper. 
If $\bsigma$ also does not end with~$L^\infty$ or~$R^\infty$, then it is everywhere growing, and we can apply Theorem~\ref{t:main}. 
If $\bsigma$ ends with $LR^\infty$ or $RL^\infty$, then we have seen in the proof of Proposition~\ref{p:TMSbalanced} that $\cL_{\bsigma} = \cF(\sigma(\{01\}^*))$ for some $\sigma \in \{L,M,R\}^*$, which is factor-balanced.
The case of $\cL_{L^\infty}$ and $\cL_{R^\infty}$ is similar.

Consider now $\bsigma$ ending with~$M^\infty$.
We first give a short direct proof that the Thue--Morse language~$\cL_{M^\infty}$ is not balanced for length~$2$; a more general proof was given in \cite{Berthe-Cecchi19}.
To this end, define recursively words $w_n, w_n' \in \cL_{M^\infty}$ of length $(4^n{+}2)/3$ by
\[
\begin{aligned}
w_1 & = 00, & M^2(w_{2n-1}) & = 0\,w_{2n}\,0, & M^2(w_{2n}) & = 1\,w_{2n+1}\,1, \\
w'_1 & = 01, & M^2(w'_{2n-1}) & = w'_{2n}\,01, & M^2(w'_{2n}) & = w'_{2n+1}\,10.
\end{aligned}
\]
Then we have $(w_1)^{(2)} = (00)$, $(w'_1)^{(2)} = (01)$,
\[
\begin{aligned}
M_2^2((w_{2n-1})^{(2)})\, (01)(11) & = (01)\,(w_{2n})^{(2)}, & M_2^2((w_{2n})^{(2)})\,(10)(00) & = (10)\,(w_{2n+1})^{(2)}, \\
M_2^2((w'_{2n-1})^{(2)})\, (10) & = (w'_{2n})^{(2)}, & M_2^2((w'_{2n})^{(2)})\, (01) & = (w'_{2n+1})^{(2)},
\end{aligned}
\]
with the substitution
\[
M_2:\, (\{0,1\}^2)^* \to (\{0,1\}^2)^*, \quad 
\begin{array}{ll}(00) \mapsto (01) (10), & (01) \mapsto (01) (11), \\ (10) \mapsto (10) (00), & (11) \mapsto (10) (01).\end{array}
\]
Using the abelianizations
\[
\bl_2(w) = \begin{pmatrix}|w|_{00} \\ |w|_{01} \\ |w|_{10} \\ |w|_{11}\end{pmatrix}, \quad \cM_{M_2} = \big(|M_2(cd)|_{ab})_{ab,cd\in\{00,01,10,11\}} = \begin{pmatrix}0&0&1&0 \\ 1&1&0&1 \\ 1&0&1&1 \\ 0&1&0&0\end{pmatrix},
\]
we obtain that
\[
\begin{aligned}
\bl_2(w_{2n}) & = \cM_{M_2}^2\, \bl_2(w_{2n-1}) + \bl_2(11), & \bl_2(w_{2n+1}) & = \cM_{M_2}^2\, \bl_2(w_{2n}) + \bl_2(00), \\
\bl_2(w'_{2n}) & = \cM_{M_2}^2\, \bl_2(w'_{2n-1}) + \bl_2(10), & \bl_2(w'_{2n+1}) & = \cM_{M_2}^2\, \bl_2(w'_{2n}) + \bl_2(00).
\end{aligned}
\]
The right eigenvectors of $\cM_{M_2}$ (to the eigenvalues $2,-1,0,1$) are
\[
\bv_2 = \begin{pmatrix}1\\2\\2\\1\end{pmatrix},\ \bv_{-1} = \begin{pmatrix}1\\{-}1\\{-}1\\1\end{pmatrix},\ \bv_0 = \begin{pmatrix}1\\0\\0\\{-}1\end{pmatrix},\ \bv_1 = \begin{pmatrix}1\\{-}1\\1\\{-}1\end{pmatrix},
\]
thus
\[
\bl_2(w_{2n}) = \frac{4^{2n}-1}{18} \bv_2 + \frac{2n}{3} \bv_{-1} - \frac{1}{2} \bv_0 \quad \mbox{and} \quad \bl_2(w'_{2n}) = \frac{4^{2n}-1}{18} \bv_2 - \frac{2n}{6} \bv_{-1} - \frac{1}{2} \bv_0,
\]
hence $\bl_2(w_{2n}) - \bl_2(w'_{2n}) = n\, \bv_{-1}$, i.e., 
\[
|w_{2n}|_{00}-|w'_{2n}|_{00} = |w'_{2n}|_{01}-|w_{2n}|_{01} = |w'_{2n}|_{10}-|w_{2n}|_{10} = |w_{2n}|_{11}-|w'_{2n}|_{11} = n.
\]

To finish the proof of the theorem, we have to show that $\cF(\sigma(\cL_{M^\infty}))$ is not factor-balanced for all $\sigma \in \{L,M,R\}^*$. 
Since $111 \notin \cL_{M^\infty}$, Lemma~\ref{l:rec} implies that $|\sigma(w)|_{\sigma(011)} = |w|_{011}$ for all $w \in \cL_{M^\infty}$, and we clearly have $0 \le |w|_{11} - |w|_{011} \le 1$. 
Therefore, we have 
\[
\big||\sigma(w)|_{\sigma(011)}  - |\sigma(w')|_{\sigma(011)}\big| \ge \big||w|_{11} - |w'|_{11}\big| - 1
\]
for all $w, w' \in \cL_{M^\infty}$.
Since $|w|_{11} - |w'|_{11}$ is unbounded for $w, w' \in \cL_{M^\infty}$ with $|w| = |w'|$, it is also unbounded when we restrict to $w, w'$ with $|w|_0 = |w'|_0$ (and $|w|_1 = |w|'_1$). 
Then we have $|\sigma(w)| = |\sigma(w')|$, thus $\sigma(\cL_{M^\infty})$ is not balanced for length $|\sigma(011)|$. 
\end{proof}

We remark that, by Proposition~\ref{p:sigmamn}, $\cF(\sigma \circ L(\cL_{M^\infty}))$ is balanced for length $|\sigma(0)| + 1$ for any substitution~$\sigma$.
On the other hand, we have seen in the proof of Theorem~\ref{t:TMSfactors} that $\cF(\sigma \circ L(\cL_{M^\infty}))$ is not balanced for length $|\sigma(01010)|$ for any substitution $\sigma \in \{L,M,R\}^*$.

\bibliographystyle{amsalpha}
\bibliography{factorbalance}

\newcommand{\etalchar}[1]{$^{#1}$}
\providecommand{\bysame}{\leavevmode\hbox to3em{\hrulefill}\thinspace}
\providecommand{\MR}{\relax\ifhmode\unskip\space\fi MR }
\providecommand{\MRhref}[2]{%
  \href{http://www.ams.org/mathscinet-getitem?mr=#1}{#2}
}
\providecommand{\href}[2]{#2}
\begin{thebibliography}{BCBD{\etalchar{+}}21}

\bibitem[Ada03]{Adamczewski03}
B.~Adamczewski, \emph{Balances for fixed points of primitive substitutions},
  Theoret. Comput. Sci. \textbf{307} (2003), no.~1, 47--75.

\bibitem[Ada04]{Adamczewski04}
\bysame, \emph{Symbolic discrepancy and self-similar dynamics}, Ann. Inst.
  Fourier (Grenoble) \textbf{54} (2004), no.~7, 2201--2234 (2005).

\bibitem[BCB19]{Berthe-Cecchi19}
V.~Berth\'{e} and P.~Cecchi~Bernales, \emph{Balancedness and coboundaries in
  symbolic systems}, Theoret. Comput. Sci. \textbf{777} (2019), 93--110.

\bibitem[BCBD{\etalchar{+}}21]{BCDLPP21}
V.~Berth\'{e}, P.~Cecchi~Bernales, F.~Durand, J.~Leroy, D.~Perrin, and
  S.~Petite, \emph{On the dimension group of unimodular {$\mathcal{S}$}-adic
  subshifts}, Monatsh. Math. \textbf{194} (2021), no.~4, 687--717.

\bibitem[BT02]{Berthe-Tijdeman02}
V.~Berth\'{e} and R.~Tijdeman, \emph{Balance properties of multi-dimensional
  words}, Theoret. Comput. Sci. \textbf{273} (2002), no.~1-2, 197--224.

\bibitem[FV02]{Fagnot-Vuillon02}
I.~Fagnot and L.~Vuillon, \emph{Generalized balances in {S}turmian words},
  Discrete Appl. Math. \textbf{121} (2002), no.~1-3, 83--101.

\bibitem[KSZ22]{Komornik-Steiner-Zou}
V.~Komornik, W.~Steiner, and Y.~Zou, \emph{Unique double base expansions},
  2022, arXiv:2209.02373.

\bibitem[Lot02]{Lothaire02}
M.~Lothaire, \emph{Algebraic combinatorics on words}, Encyclopedia of
  Mathematics and its Applications, vol.~90, Cambridge University Press,
  Cambridge, 2002.

\bibitem[MH40]{Morse-Hedlund40}
M.~Morse and G.~A. Hedlund, \emph{Symbolic dynamics {II}. {S}turmian
  trajectories}, Amer. J. Math. \textbf{62} (1940), 1--42.

\bibitem[Que10]{Queffelec10}
M.~Queff\'{e}lec, \emph{Substitution dynamical systems---spectral analysis},
  second ed., Lecture Notes in Mathematics, vol. 1294, Springer-Verlag, Berlin,
  2010.

\bibitem[Ste20]{Steiner20}
W.~Steiner, \emph{Thue-{M}orse-{S}turmian words and critical bases for ternary
  alphabets}, Bull. Soc. Math. France \textbf{148} (2020), no.~4, 597--611.

\end{thebibliography}
\end{document}